\documentclass[journal]{IEEEtran}
\usepackage{amsthm}
\usepackage{amssymb}
\usepackage{graphicx}
\usepackage{amsmath}
\usepackage{newtxmath}
\usepackage[cal=cm]{mathalpha} 
\usepackage{tablefootnote} 
\usepackage{threeparttable} 
\usepackage{bm}
\newtheorem{theorem}{Theorem}{}
{}
{}
{}
{}
{}

\usepackage{algorithm}
\usepackage{algpseudocode}
\usepackage{graphics}
\usepackage{epsfig}
\usepackage{multirow}
\usepackage{caption}
\usepackage{array}
\usepackage{cite}
\usepackage{stfloats}
\usepackage{subfig} 
\usepackage{booktabs}
\usepackage{comment}
\usepackage{color}
\usepackage{tabu} 
\usepackage{stfloats}
\usepackage{mathtools}
\usepackage{stmaryrd} 
\usepackage[
colorlinks=true,
linkcolor=blue,
anchorcolor=blue,
citecolor=blue,
bookmarks=true
]{hyperref}
\usepackage{etoolbox}
\makeatletter

\let\OriginalCite\cite

\newcommand{\MarkFirstCitation}[1]{%
	\ifcsundef{FirstCite@#1}{%
		\csgdef{FirstCite@#1}{1}%
		\Hy@raisedlink{%
			\hyper@anchorstart{FirstCite:#1}%
			\hyper@anchorend
		}%
	}{}%
}
\renewcommand{\cite}[2][]{%
	\ifstrempty{#1}{%
		\OriginalCite{#2}%
	}{%
		\OriginalCite[#1]{#2}%
	}%
	\forcsvlist{\MarkFirstCitation}{#2}%
}
\let\OldBibLabel\@biblabel
\def\CurrentBibKey{}

\renewcommand{\@biblabel}[1]{%
	\ifx\CurrentBibKey\@empty
	\OldBibLabel{#1}%
	\else
	\ifcsname FirstCite@\CurrentBibKey\endcsname
	\begingroup
	\hypersetup{linkcolor=black}%
	\hyperlink{FirstCite:\CurrentBibKey}{%
		\OldBibLabel{#1}%
	}%
	\endgroup
	\else
	\OldBibLabel{#1}%
	\fi
	\fi
}
\let\OldAtBibitem\@bibitem
\def\@bibitem#1{%
	\gdef\CurrentBibKey{#1}%
	\OldAtBibitem{#1}%
}

\let\OldAtLbibitem\@lbibitem
\def\@lbibitem[#1]#2{%
	\gdef\CurrentBibKey{#2}%
	\OldAtLbibitem[#1]{#2}%
}

\makeatother
\usepackage[capitalize]{cleveref} 
\crefname{figure}{Fig.}{Figs.}   
\Crefname{figure}{Fig.}{Figs.}   
\usepackage{subcaption}
\makeatletter
\let\sum\relax
\let\prod\relax
\DeclareSymbolFont{largesymbols}{OMX}{cmex}{m}{n}
\DeclareMathSymbol{\sum}{\mathop}{largesymbols}{"50}
\DeclareMathSymbol{\prod}{\mathop}{largesymbols}{"51}
\DeclareMathOperator*{\diag}{\text{diag}}

\DeclareMathOperator*{\tr}{\text{tr}}

\DeclareMathOperator*{\var}{\mathrm{var}}
\DeclareMathOperator*{\cov}{\mathrm{cov}}

\begin{document}
	\title{Movable Antenna-Enabled Multi-Target Sensing: Sidelobe Suppression via Position Optimization}
	
	
	\author{\IEEEauthorblockN{
			Chengzhi Ye, \IEEEmembership{Student Member,~IEEE}, 
			Ruoyu Zhang, \IEEEmembership{Senior Member,~IEEE},
			Lipeng Zhu, \IEEEmembership{Senior Member,~IEEE},
			Qingqing Wu, \IEEEmembership{Senior Member,~IEEE},
		}
		\thanks{
			
		Chengzhi Ye and Ruoyu Zhang are with the Key Laboratory of Near-Range RF Sensing ICs and Microsystems (NJUST), Ministry of Education, School of Electronic and Optical Engineering, Nanjing University of Science and Technology, Nanjing 210094, China. (e-mail: qxsycz8166@njust.edu.cn; ryzhang19@njust.edu.cn). 
		Lipeng Zhu is with the State Key Laboratory of CNS/ATM and the School
		of Interdisciplinary Science, Beijing Institute of Technology, Beijing 100081,
		China (e-mail: zhulp@bit.edu.cn). Qingqing Wu is with the Department of Electronic Engineering, Shanghai Jiao Tong University, Shanghai 200240, China (email: qingqingwu@sjtu.edu.cn). 
			\textit{(Corresponding author: Ruoyu Zhang)}}
		\vspace{-2em}
	}

	
	\maketitle
	
	

	%
	\IEEEpeerreviewmaketitle

		\begin{abstract}
			Movable antenna (MA) has emerged as a promising technology to unlock spatial degrees of freedom for wireless sensing systems. However, optimizing MA positions for multi-target sensing remains challenging due to the complicated coupling between the array geometry and unknown target directions, as well as severe sidelobe interference. In this paper, we propose an MA-enabled multi-target wireless sensing scheme to achieve both high-precision and robust target parameter estimation. To characterize the sensing performance while maintaining tractability, we introduce two complementary metrics, namely a lower bound (LB) of the Cram\'er-Rao bound (CRB) and the steering vector correlation (SVC). Specifically, by investigating the partial order relationship of the Fisher information matrix, we derive a closed-form LB of the multi-target CRB, which decouples the unknown target directions. Meanwhile, the SVC is employed to characterize the spatial correlation and sidelobe-induced interference among sensing signals over different directions. Based on these two metrics, we formulate an MA position optimization problem that minimizes the SVC while constraining the LB below a predefined threshold. To render the resulting non-convex problem tractable, we exploit the spatial symmetry of the continuous SVC function and discretize the sidelobe region, thereby transforming the spatial minimax objective into a finite set of constraints. Meanwhile, the LB constraint is equivalently reformulated as a spatial variance constraint on the MA positions. Building on these transformations, we develop a successive convex approximation (SCA) algorithm within an alternating optimization (AO) framework to iteratively optimize the horizontal and vertical MA coordinates. 
			Numerical results demonstrate that the proposed MA design significantly suppresses the SVC sidelobes while achieving superior estimation accuracy and enhanced sensing robustness compared with conventional configurations.
		\end{abstract}
	\begin{IEEEkeywords}
		Movable antenna (MA), wireless sensing, antenna position optimization, steering vector correlation (SVC).
	\end{IEEEkeywords}
	\section{Introduction}
	
	The deployment of future sixth-generation (6G) wireless networks facilitates innovative applications requiring exceptional environmental perception and highly accurate target localization \cite{Saad2020AVision}. To efficiently address these demands, integrated communications and sensing (ISAC) has emerged as a revolutionary architecture utilizing shared resources to make wireless sensing a fundamental capability for identifying surrounding objects \cite{RuoyuDOA}.
	To elevate sensing capabilities, multiple-input multiple-output (MIMO) technologies have been extensively investigated \cite{faisal2020ultramassive,Zhang2022Tensor}. Increasing the number of antennas provides more spatial degrees of freedom (DoF), allowing the system to resolve and estimate more targets, while a larger physical aperture improves spatial resolution and sensing accuracy \cite{DaiQLMIMOOFDM}. Over the past decades, antenna array technologies have evolved from conventional MIMO to massive MIMO, driven by the increasing demands for higher spectral efficiency and more accurate spatial sensing \cite{Zhang2024TWC}. This development has further extended to high-frequency ISAC systems, where large-scale antenna arrays provide enhanced spatial processing capabilities \cite{RZhangTerahertzIntegrated2025}. More recently, extremely large-scale MIMO has emerged to further expand the physical array aperture, offering new opportunities for high-resolution sensing and fine-grained spatial information acquisition \cite{RuoyuDynamic}. Nevertheless, continuously scaling up the number of antennas inevitably incurs substantial hardware costs and associated radio frequency (RF) chains, energy consumption, and computational complexity \cite{GaoLowComplexityMIMO, HaiquanXLMIMO}.
	
	
	Instead of increasing the antenna number, sparse arrays provide an alternative way to enlarge the physical aperture with a limited number of antenna elements \cite{ye2026rotatableantennaenhancedwirelesssensing,ZhuMAsurvey}. By allowing selected inter-element spacings to exceed half a wavelength, sparse arrays can achieve enhanced angular resolution and increased DoF with fewer antennas. Representative designs include nested arrays, difference arrays, and coprime arrays \cite{LiQiangAccurate2019,PiyaVaidyanathan2010,NestedMIMO}. Specifically, the nested arrays combine multiple uniform linear subarrays to increase the available DoF for direction-of-arrival (DOA) estimation \cite{PiyaVaidyanathan2010}, while the hole-free difference coarray can further improve the effective virtual aperture \cite{NestedMIMO}. Coprime arrays exploit coprime spatial samplers to realize high-resolution sensing with a reduced antenna number \cite{VaidyanathanCoPrime}. Despite these advantages, conventional sparse arrays rely on predetermined and highly structured antenna geometries. Once deployed, their physical locations remain fixed and cannot be adapted to changing sensing environments or different performance requirements. From the perspective of array geometry design, this limitation motivates a reconfigurable form of sparse/nonuniform array in which the antenna locations can be dynamically optimized rather than predetermined.
	
	Movable antenna (MA) provides such spatial reconfigurability by enabling antenna elements to flexibly adjust their positions within spatial regions \cite{MaMAsurvey,C11456856}. Different from conventional sparse arrays with fixed geometries, MAs introduce continuous position variables as additional spatial DoF, allowing the array geometry and the corresponding field response to be adapted to the propagation environment and system objective \cite{ModelingMA,LWangMovableAntenna2025,10944486}. In wireless communications, such flexibility provides three fundamental performance gains, namely spatial diversity, beamforming, and spatial multiplexing gains. Specifically, MA position adjustment enables the selection of favorable channel conditions to enhance spatial diversity, improves beamforming performance by reshaping the spatial channel responses, and facilitates spatial multiplexing by reducing channel correlation among different data streams \cite{MaMAsurvey,C11456856,6DMovableAntennaShaoXiaodan2025}. These advantages have motivated extensive studies on MA-enabled wireless communications, including channel acquisition for MA-MIMO systems \cite{ZhangChannel2024} and joint antenna position and power optimization for multiuser uplink transmission \cite{Zhu2023Movable}.
	
	The spatial flexibility of MAs has also been extensively exploited to improve wireless sensing performance \cite{WangNearfieldMA,6DSensing,LiJinTRMAsensing,YuchenMASensing,ChengzhiTWCMIMOOFDM}. In particular, MAs offer several fundamental advantages for wireless sensing. First, they can enlarge the effective array aperture to enhance angular resolution without increasing the number of antennas. Second, the flexible reconfiguration of array geometries enables the spatial response to be optimized for improved estimation accuracy and reduced angular ambiguity \cite{maMovableAntennaEnhanced2024}. Third, antenna movement during sensing, when permitted, provides additional spatial-temporal DoF for enhancing sensing performance \cite{ma2025movableantennatrajectory}. These advantages have motivated extensive investigations into MA-enabled sensing. Specifically, the authors in \cite{maMovableAntennaEnhanced2024} optimized MA positions to minimize the Cram\'er-Rao bound (CRB) for DOA estimation, while the authors in \cite{ChengzhiIOTJ20262D} jointly optimized the shape of the movable region and the antenna positions to further improve estimation accuracy. Furthermore, \cite{ma2025movableantennatrajectory} investigated the fundamental CRB scaling laws over time and space through continuous MA trajectory optimization. Besides position optimization, practical sensing robustness under array imperfections has also been investigated \cite{DOAGPEChengzhi2025,ye2026selfcalibrationdoaestimationmovable}, including gain and phase error compensation and antenna position self-calibration. Finally, MAs can flexibly adapt their array configurations to different communication and sensing requirements. In this regard, the authors in \cite{MaISACposition} optimized MA positions in an ISAC system to enlarge the achievable trade-off region between communication rate and sensing performance.

	Despite these noteworthy advances, extending sensing-oriented MA position optimization to unknown multi-target scenarios remains challenging. Recently, the authors in \cite{mao2025movableantennaarrayenhancedmultitarget} characterized the multi-target CRB as a function of MA positions and optimized the expected CRB over a prescribed statistical distribution of target directions to improve estimation accuracy. Furthermore, the authors in \cite{chen2026exploitingmovableantennasmultitarget} investigated multi-target sensing in multipath environments, where the posterior probabilities of target locations were exploited to construct a weighted angular ambiguity function for MA position optimization, aiming to suppress sidelobes and narrow the main lobe. Although these studies have demonstrated the potential of MAs for multi-target sensing, the former relies on the statistical distribution of target directions, while the latter requires posterior information obtained from preliminary target estimation. Therefore, designing MA arrays to suppress spatial ambiguity without relying on such target-specific information remains an important problem.
	
	More importantly, high-resolution multi-target sensing requires not only a sufficiently large effective array aperture but also effective suppression of spatial ambiguity across the entire sensing region. For spectrum-based DOA estimation, different target directions correspond to different steering vectors, whose correlations characterize the ambiguity and sidelobe properties of the array response. Excessively high correlations may introduce strong sidelobes and ambiguous spatial responses, potentially masking weak targets or producing false peaks, particularly under low signal-to-noise ratio (SNR) conditions. Such effects can become more pronounced in sparse or highly nonuniform array configurations with large spatial gaps. Motivated by these observations, we consider a target-direction-independent MA position optimization framework that minimizes the maximum steering vector correlation (SVC) over a prescribed sidelobe region, without requiring prior knowledge of the actual target DOAs or their location posterior probabilities. Meanwhile, an aperture-related constraint derived from a lower bound (LB) on the multi-target CRB is imposed to maintain a minimum spatial spread of the MA array.
	
	Motivated by the above observations, in this paper, we investigate a multi-target wireless sensing scheme based on an MA array to jointly achieve high-precision parameter estimation and robust spatial interference suppression. The main contributions are summarized as follows:

	\begin{enumerate}
		\item We establish the system model for wireless sensing and theoretically derive a closed-form lower bound for the highly coupled multi-target CRB. By exploiting the partial order relationship of projection matrices within the Fisher information matrix (FIM), this derivation successfully decouples the unknown target angles to establish a tractable theoretical limit for the estimation variance. Furthermore, we introduce the SVC as a complementary metric to accurately quantify the spatial interference suppression capability.
		\item Guided by the proposed metrics, we formulate an optimization problem to design the MA positions to minimize the SVC subject to a specific CRB-LB constraint. To render this intractable spatial minimax optimization solvable, we utilize the symmetric property of the continuous SVC function to transform it into a finite set of manageable discrete constraints through spatial sampling and auxiliary variables. Simultaneously, the complex CRB-LB performance requirement is equivalently reformulated into a straightforward spatial variance constraint.
		\item To solve the optimization problem, we develop a successive convex approximation (SCA)-based AO algorithm, which optimizes the $x$-axis and $y$-axis of the MA positions iteratively. Specifically, within each decomposed subproblem, we derive a Lipschitz constant independent of the spatial locations to construct a convex surrogate upper bound for the SVC function, thereby guaranteeing the convergence of the algorithm.
		\item We conduct numerical simulations to evaluate the proposed solution. For fixed target directions, we compare the multiple signal classification (MUSIC) spatial spectrum and the root mean square error (RMSE) across different array configurations. Moreover, we validate the statistical robustness of the proposed MA via estimation success rates under randomized target directions. The results verify that the proposed MA achieves high estimation accuracy and sensing robustness in multi-target scenarios.
	\end{enumerate}
	
	\section{System Model}

	\begin{figure}[t]
		\centering
		\includegraphics[width=0.85\linewidth]{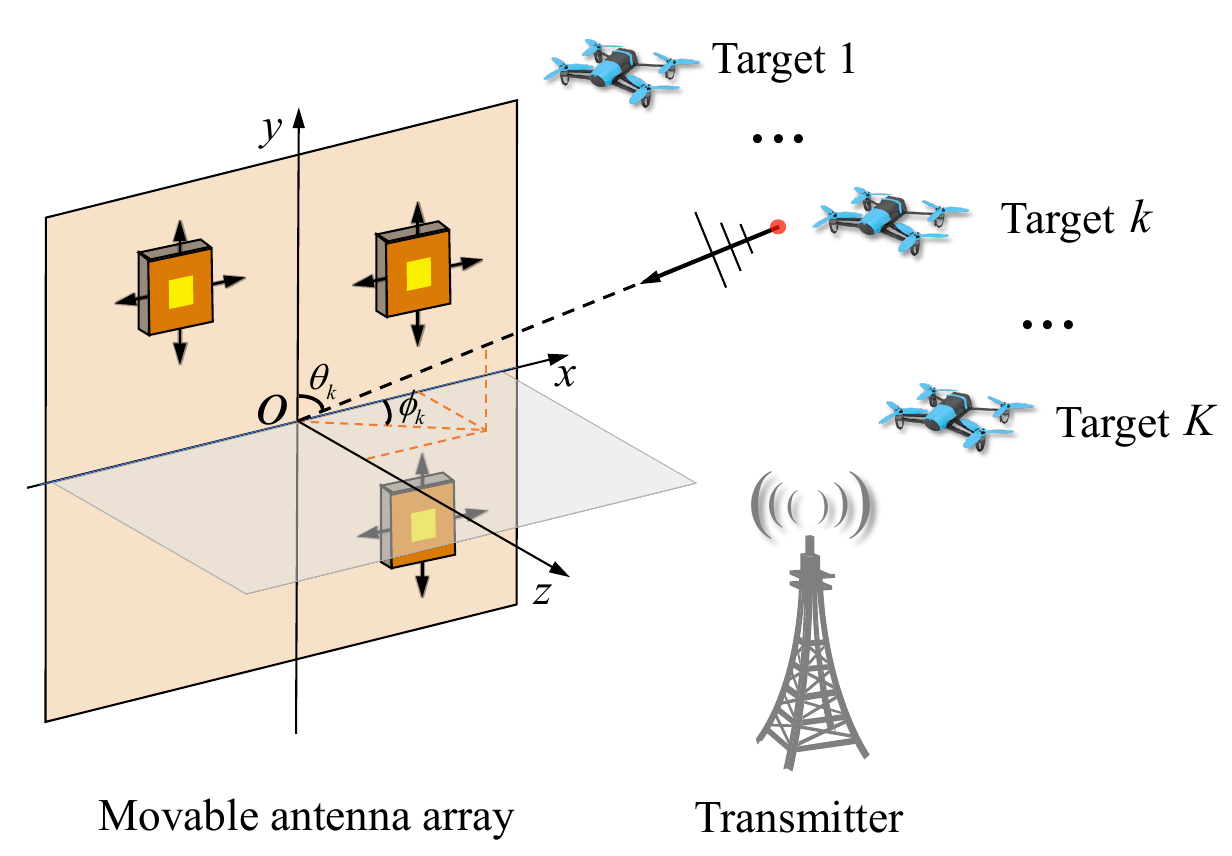}
		\caption{Multi-target sensing with MAs.}
		\label{FigRA}
		\vspace{-0.5em}
	\end{figure}
	\subsection{System Model}
	As shown in Fig. \ref{FigRA}, we consider a
	bistatic wireless sensing system
	with $K$ targets, consisting of a
		transmitter and a receiving base station.
	The transmitter employs a single
		omnidirectional antenna to illuminate the targets,
		while the receiving base station collects the
		target-scattered signals using $N$ MAs.
		The received signals from different targets are
		assumed to be mutually uncorrelated.
	The receiving base station is equipped
	with $N$ MAs that can move freely within the region
	$\mathcal{C}=[0, A_x]\times [0, A_y]$.
	The MA positions are denoted by
	$\bm{p} = [\bm{x}^T, \bm{y}^T]^T
	\in \mathbb{R}^{2N \times 1}$,
	where $\bm{x}=[x_1, \ldots, x_N]^T
	\in \mathbb{R}^{N \times 1}$ and
	$\bm{y}=[y_1, \ldots, y_N]^T
	\in \mathbb{R}^{N \times 1}$
	represent the $x$-axis and $y$-axis MA positions,
	respectively. For convenience, let
	$\bm{p}_n = [x_n,y_n]^T
	\in \mathbb{R}^{2 \times 1}$
	denote the position coordinates of the $n$-th MA.
	The DOA at
		the receiving base station of the $k$-th target
	is expressed as
	$\bm{r}_k=[u_k, v_k]^T
	\in \mathbb{R}^{2 \times 1}$,
	where $u_k = \sin\theta_k\cos\phi_k
	\in [-u_{\max}, u_{\max}]$
	and $v_k=\cos\theta_k
	\in [-v_{\max}, v_{\max}]$,
	with $\theta_k$ and $\phi_k$ denoting the
	elevation and azimuth angles of the $k$-th target
	with respect to the receiving base
		station, respectively. Consequently, the steering vector of the $k$-th target can be formulated as
	\begin{align}
		\bm{a}(\bm{p}, \bm{r}_k) = \Big[\mathrm{e}^{\mathrm{j}\kappa \bm{p}_1^T\bm{r}_k}, \ldots, \mathrm{e}^{\mathrm{j}\kappa \bm{p}_N^T\bm{r}_k}\Big]^T\in \mathbb{C}^{N \times 1},
	\end{align}
	where $\kappa = \frac{2\pi}{\lambda}$ represents the wavenumber. Then the received signal at the $t$-th snapshot, can be expressed as
	\begin{align}
		\bm{y}(t) = \sum_{k=1}^{K}\bm{a}(\bm{p}, \bm{r}_k) s_k(t)+\bm{n}(t), 1\leq t\leq T,
	\end{align}
	where $s_k(t) \in \mathbb{C}$ represents the signal of the $k$-th target at the $t$-th snapshot, and $\bm{n}(t)= [n_1(t), \ldots, n_N(t)]^T\sim \mathcal{C}\mathcal{N}(\bm{0}_N, \sigma_n^2\bm{I}_N)$ denotes the additive white Gaussian noise (AWGN) at the $t$-th snapshot with $\sigma_n^2$ denoting the noise power. For all $T$ snapshots, the received signal matrix can be given by
	\begin{align}
		\bm{Y} = \bm{A}\bm{S}+\bm{N},\label{YY}
	\end{align}
	where $\bm{Y}= [\bm{y}(1), \ldots, \bm{y}(T)]\in \mathbb{C}^{N\times T}$, $\bm{S}=[\bm{s}(1), \ldots, \bm{s}(T)]\in \mathbb{C}^{K \times T}$, $\bm{s}(1)=[s_1(1), \ldots, s_K(1)]\in \mathbb{C}^{K \times 1}$, $\bm{N}=[\bm{n}(1), \ldots, \bm{n}(T)] \in \mathbb{C}^{N \times T}$, and $\bm{A} = [\bm{a}(\bm{p}, \bm{r}_1) ,\ldots, \bm{a}(\bm{p}, \bm{r}_K) ] \in \mathbb{C}^{N \times K}$. From \eqref{YY}, we obtain the covariance matrix $\bm{R}_{\bm{Y}}\in \mathbb{C}^{N \times N}$ of the received signal $\bm{Y}$ as
	\begin{align}
		\bm{R}_{\bm{Y}} = \mathbb{E}\{\bm{Y}\bm{Y}^H\}= \bm{A} \bm{R}_{\bm{S}}\bm{A}^H +\sigma_n^2 \bm{I}_N,
	\end{align}
	where $\bm{R}_{\bm{S}}=\diag(\sigma_1^2, \ldots,\sigma_K^2)\in \mathbb{R}^{K \times K}$ denotes the signal covariance matrix. $\sigma_k^2$ represent the signal power of the $k$-th target, respectively. Applying eigenvalue decomposition to $\bm{R}_{\bm{Y}}$ yields
	\begin{align}
		\bm{R}_{\bm{Y}}=\sum_{n=1}^{N}\xi_n \bm{e}_n \bm{e}_n^H=\bm{E}_{\bm{S}}\bm{\varXi}_{\bm{S}}\bm{E}_{\bm{S}}^H + \bm{E}_{\bm{N}} \bm{\varXi}_{\bm{N}} \bm{E}_{\bm{N}}^H,
	\end{align}
	where $\xi_n$ denotes the $n$-th largest eigenvalue, and $\bm{e}_n \in \mathbb{C}^{N \times 1}$ indicates the corresponding eigenvector. $\bm{E}_{\bm{S}}=[\bm{e}_1, \ldots, \bm{e}_{K}] \in \mathbb{C}^{N \times K}$ and $\bm{E}_{\bm{N}}=[\bm{e}_{K+1}, \ldots, \bm{e}_{N}] \in \mathbb{C}^{N \times (N-K)}$ represent the signal subspace and the noise subspace, respectively. Then the multi-target DOA estimation based on the MUSIC algorithm requires searching for the $K$ largest peaks in the following MUSIC spatial spectrum:
	\begin{align}
		P(\bm{r}) = \frac{1}{\bm{a}^H(\bm{p}, \bm{r})\bm{E}_{\bm{N}}\bm{E}_{\bm{N}}^H \bm{a}(\bm{p}, \bm{r})},
	\end{align}
	where $\bm{r} \in [-u_{\max}, u_{\max}]\times [-v_{\max}, v_{\max}]$ represents the possible target direction. By finding the estimated spatial direction vectors $\{\bm{\hat{r}}_k= [\hat{u}_k, \hat{v}_k]^T\}_{k=1}^{K}$ corresponding to the $K$ peaks of $P(\bm{r})$, the $k$-th DOA can be estimated as
	\begin{align}
		\hat{\theta}_k = \arccos(\hat{v}_k), \quad
		\hat{\phi}_k =
		\arccos\Big(
		\frac{\hat{u}_k}{\sqrt{1-\hat{v}_k^2}}
		\Big),
	\end{align}
	where $\hat{\theta}_k$ and $\hat{\phi}_k$ denote the estimated elevation and azimuth angles of the $k$-th target, respectively. Next, we will introduce the performance metrics of the wireless sensing system to optimize the MA positions.
	
	\vspace{-0.5em}
	\section{Performance Metrics}
	In this section, we present two complementary evaluation metrics for wireless sensing. The first metric is the LB derived from the CRB to establish a theoretical limit for the estimation accuracy. Since the LB inherently omits the mutual interference among multiple targets, we further introduce the SVC as a complementary metric to measure the sidelobe levels and evaluate the interference suppression capability.
	\vspace{-0.5em}
	\subsection{CRB Lower Bound}
	It is well known that the CRB characterizes the lower bound for the performance of unbiased parameter estimation, which in the considered wireless sensing system can be specifically expressed as
	\begin{align}\label{CRBpw}
		\!\!\!\!{\mathrm{CRB}(\bm{p}, \bm{\omega})} =\frac{\sigma_n^2}{2}\tr \Big\{\Big[\mathfrak{R}\Big\{\sum_{t=1}^{T}{\bm{\tilde{S}}^H(t)\bm{W}^H\bm{P}_{\bm{A}}^{\perp}}\bm{W}\bm{\tilde{S}}(t)\Big\}\Big]^{-1}\Big\},
	\end{align}
	where the detailed derivation and definitions of the relevant variables are provided in Appendix \ref{AppendixA}. However, since $\bm{P}_{\bm{A}}^{\perp} = \bm{I}_N - \bm{A}(\bm{A}^H\bm{A})^{-1}\bm{A}^H$ contains the unknown angles, it is impossible to separate a function from the CRB that depends solely on the MA positions and is independent of the unknown angles. This coupling makes direct CRB-based optimization intractable, and we therefore derive an LB on the CRB, which can be shown in Theorem \ref{theorem1}.
	
	\begin{theorem}\label{theorem1}
		For the considered wireless sensing system, an LB on the CRB can be given by
		\begin{align}
			\!\!\!\!\!{\mathrm{CRB}(\bm{p}, \bm{\omega})}\ge \mathrm{LB}(\bm{p}, \bm{\omega}) =\frac{1}{\var(\bm{y}) +\var(\bm{x})} \sum_{k=1}^{K} \frac{\sigma_n^2\lambda^2}{2\pi^2 TN \sigma_k^2}.
		\end{align}
		where $\bm{\omega} = [\bm{u}^T, \bm{v}^T]^T$.
	\end{theorem}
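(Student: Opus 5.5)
The plan is to lower-bound the FIM in the positive semidefinite (Loewner) order by a block-diagonal matrix with no dependence on the angles, and then use the fact that the trace of the inverse is order-reversing.

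\textbf{Setting up the FIM.} Following Appendix~\ref{AppendixA}, take $\bm{W}=[\bm{D}_u,\bm{D}_v]$, where $\bm{D}_u=\partial\bm{A}/\partial\bm{u}$ and $\bm{D}_v=\partial\bm{A}/\partial\bm{v}$ collect the columns $\mathrm{j}\kappa\,\mathrm{diag}(\bm{x})\bm{a}(\bm{p},\bm{r}_k)$ and $\mathrm{j}\kappa\,\mathrm{diag}(\bm{y})\bm{a}(\bm{p},\bm{r}_k)$. Take $\bm{\tilde{S}}(t)=\bm{I}_2\otimes\mathrm{diag}(\bm{s}(t))$. The FIM is then $\bm{F}=\frac{2}{\sigma_n^2}\mathfrak{R}\{\sum_t\bm{\tilde{S}}^H\bm{W}^H\bm{P}_{\bm{A}}^{\perp}\bm{W}\bm{\tilde{S}}\}$.

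\textbf{Step 1: removing the angle coupling.} Here I would use the ordering $\bm{P}_{\bm{A}}^{\perp}\preceq\bm{P}_{\bm{1}}^{\perp}=\bm{I}_N-\frac{1}{N}\bm{1}\bm{1}^H$ in the appropriate direction. Because $\bm{P}_{\bm{A}}^{\perp}\preceq\bm{I}_N$ holds, I would compare instead with the projection that removes the component along each $\bm{a}_k$ itself. This works because $\mathcal{R}(\bm{A})\ni\bm{a}_k$: for each $k$, the quadratic form $\bm{w}^H\bm{P}_{\bm{A}}^{\perp}\bm{w}\le\bm{w}^H\bm{P}_{\bm{a}_k}^{\perp}\bm{w}$. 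Rather than forcing the full matrix order, the cleaner route is the scalar one. Since $\tr(\bm{F}^{-1})\ge\sum_i 1/[\bm{F}]_{ii}$ for positive definite $\bm{F}$ (Cauchy--Schwarz, or convexity of $1/x$ on the eigen-decomposition), it suffices to upper-bound each diagonal FIM entry. The diagonal entry for $u_k$ is $\frac{2}{\sigma_n^2}\sum_t|s_k(t)|^2\,\bm{d}_{u,k}^H\bm{P}_{\bm{A}}^{\perp}\bm{d}_{u,k}$. Using $\bm{P}_{\bm{A}}^{\perp}\preceq\bm{P}_{\bm{a}_k}^{\perp}$ and computing directly,
\begin{align*}
\bm{d}_{u,k}^H\bm{P}_{\bm{a}_k}^{\perp}\bm{d}_{u,k}=\kappa^2\Big(\sum_n x_n^2-\tfrac{1}{N}\big(\textstyle\sum_n x_n\big)^2\Big)=\kappa^2N\var(\bm{x}).
\end{align*}
The unimodular phases cancel, so this quantity is angle-free. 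The same computation gives $\kappa^2N\var(\bm{y})$ for $v_k$. Replacing $\sum_t|s_k(t)|^2$ by $T\sigma_k^2$ (the asymptotic or expected signal power) gives $[\bm{F}]_{u_k u_k}\le\frac{2\kappa^2TN\sigma_k^2}{\sigma_n^2}\var(\bm{x})$, and the analogous bound for $v_k$.

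\textbf{Step 2: combining to the stated form.} This yields
\begin{align*}
\mathrm{CRB}\ge\sum_k\frac{\sigma_n^2}{2\kappa^2TN\sigma_k^2}\Big(\frac{1}{\var(\bm{x})}+\frac{1}{\var(\bm{y})}\Big).
\end{align*}
Then $\frac1a+\frac1b\ge\frac{4}{a+b}\ge\frac{1}{a+b}$, and with $\kappa^2=4\pi^2/\lambda^2$,
\begin{align*}
\frac{4\sigma_n^2}{2\kappa^2TN\sigma_k^2}=\frac{\sigma_n^2\lambda^2}{2\pi^2TN\sigma_k^2}.
\end{align*}
This is exactly the stated $\mathrm{LB}$ with denominator $\var(\bm{x})+\var(\bm{y})$. (If the intended bound is the partial-order one, $\bm{F}\preceq\mathrm{blkdiag}$, the same diagonal quantities appear and the final inequality $\tr(\bm{F}^{-1})\ge\tr(\bm{G}^{-1})$ follows from order reversal.)

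\textbf{Main obstacle.} The delicate point is justifying the passage from $\bm{F}$ to its diagonal, i.e.\ the partial order. One option is the inequality $[\bm{F}^{-1}]_{ii}\ge1/[\bm{F}]_{ii}$, which follows from the Schur complement. The other is to show $\bm{P}_{\bm{A}}^{\perp}\preceq\bm{P}_{\bm{a}_k}^{\perp}$ entrywise per target. Cross terms between $u_k$ and $v_k$ must not break this, and the Schur-complement argument handles them automatically. I would also check that the real part and the signal-power approximation ($\frac1T\sum_t|s_k(t)|^2\to\sigma_k^2$) are treated consistently with Appendix~\ref{AppendixA}.
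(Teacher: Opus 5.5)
Your argument is correct and gives exactly the stated constant, but it decouples the FIM differently from the paper.

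The paper works with the per-target $2\times 2$ diagonal blocks of $\bm G$ (your $\bm F$). Its steps are:
\begin{itemize}
\item it uses $\tr\{\bm G^{-1}\}\ge\sum_{k}\tr\{\bm G_{k,k}^{-1}\}$;
\item it applies $\bm P_{\bm A}^{\perp}\preceq\bm P_{\bm a_k}^{\perp}$ to get $\bm G_{k,k}^{-1}\succeq\bm{\bar G}_{k,k}^{-1}$;
\item it computes all four entries of $\bm{\bar G}_{k,k}$, including the cross term $\cov(\bm x,\bm y)$;
\item it inverts this $2\times 2$ matrix to get $\sum_k\varpi_k\big[(\var(\bm x)-\cov^2(\bm x,\bm y)/\var(\bm y))^{-1}+(\var(\bm y)-\cov^2(\bm x,\bm y)/\var(\bm x))^{-1}\big]$;
\item it applies $\frac1a+\frac1b\ge\frac{4}{a+b}$ and only then discards the covariance terms.
\end{itemize}

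You instead go directly to scalars via $[\bm F^{-1}]_{ii}\ge 1/[\bm F]_{ii}$. This removes both the inter-target coupling and the $u_k$--$v_k$ coupling in one step. The projection ordering is then needed only as a quadratic-form inequality on $\dot{\bm a}_{u,k}$ and $\dot{\bm a}_{v,k}$, and no cross term or $2\times 2$ inverse has to be computed.

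The cost is a looser intermediate bound, $\sum_k\varpi_k\big(1/\var(\bm x)+1/\var(\bm y)\big)$. It ignores the $x$--$y$ correlation of the array geometry, which the paper's intermediate expression keeps. Since the paper throws that correlation away in its last inequality anyway, both routes end at the same $\mathrm{LB}(\bm p,\bm\omega)$, and yours is shorter.

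Your replacement of $\sum_t|s_k(t)|^2$ by $T\sigma_k^2$ is exactly what the paper does in its expression for $\bm G_{k,k}$, so there is no inconsistency there.

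Three asides in your write-up are not valid as stated. None is used in your main argument, so you should simply delete them.
\begin{itemize}
\item The comparison $\bm P_{\bm A}^{\perp}\preceq\bm I_N-\frac1N\bm 1\bm 1^H$ holds only if $\bm 1$ lies in the column space of $\bm A$, for example when a target is at $\bm r_k=\bm 0$.
\item The extra step $\frac{4}{a+b}\ge\frac{1}{a+b}$ would discard the factor $4$ that the stated constant needs. Your final computation correctly keeps it.
\item $\bm F\preceq\mathrm{blkdiag}(\bm F_{1,1},\ldots,\bm F_{K,K})$ fails whenever the off-diagonal blocks are nonzero. The valid block-level fact is the Schur-complement inequality: each diagonal block of $\bm F^{-1}$ dominates the inverse of the corresponding diagonal block of $\bm F$. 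This is exactly the paper's first inequality and the block version of the scalar inequality you use.
\end{itemize}
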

	\begin{proof}
		See Appendix \ref{AppendixB}.
	\end{proof}
	Theorem \ref{theorem1} successfully decouples the unknown target angles from the highly coupled CRB matrix and establishes a tractable closed-form lower bound. It explicitly demonstrates that the theoretical estimation error limit is inversely proportional to the spatial variances of the MA coordinates. This mathematical property reveals a critical physical insight that expanding the spatial distribution of the MAs effectively enlarges the array aperture to enhance the overall sensing performance. It is worth noting that imposing an upper threshold
	on the LB does not directly guarantee the actual
	CRB performance. Instead, this constraint ensures
	a minimum spatial spread of the MA array, thereby
	preserving the array aperture while suppressing the
	sidelobes through SVC minimization.
	
	Minimizing the LB fundamentally lowers the theoretical limit of the estimation variance, thereby providing a more favorable theoretical lower bound for target angle estimation. Therefore, enlarging the spatial variances of the MA coordinates can effectively decrease the derived LB and improve the fundamental estimation-accuracy limit.  However, based on Theorem \ref{theorem1}, the LB is derived through a relaxation that omits the cross-terms of the FIM. Consequently, the LB fails to characterize the mutual influence among different targets. Furthermore, the interference originating from multiple targets as well as from other non-target directions within the search area significantly affects the DOA estimation results. Therefore, in the subsequent subsection, we introduce the SVC metric to comprehensively evaluate the spatial resolution performance and the interference suppression capability in multi-target scenarios.
	
	\subsection{Steering Vector Correlation}
	The SVC characterizes the steering vector correlation between the target direction and potential target directions. 
	For the $k$-th target direction $\bm{r}_k$ and a candidate target direction $\bm{r}$ the SVC can be specifically expressed as
	\begin{align}
		\mathrm{SVC}(\bm{p},\bm{r}_k, \bm{r})&= \frac{1}{N^2}|\bm{a}(\bm{p}, \bm{r}_k)^H \bm{a}(\bm{p}, \bm{r})|^2.
	\end{align}
	On the one hand, for a fixed target direction $\bm{r}_k$, the SVC equals unity at $\bm{r}=\bm{r}_k$, while the spatial correlation responses outside a predefined main lobe protection region are treated as sidelobes for array design. Minimizing the maximum SVC over this sidelobe region suppresses high-correlation peaks that may otherwise cause angular ambiguities and false detections, particularly under low-SNR conditions. On the other hand, it mitigates the mutual interference caused by the target's own sidelobes leaking into the spatial directions of other co-existing targets. Consequently, this helps prevent weak targets from being masked by the strong sidelobes of adjacent targets, thereby improving the robustness of multi-target DOA estimation.

	\section{Problem Formulation and Transformation}
	
	\subsection{Sidelobe Region and Problem Formulation}
	By expanding the spatial inner product in the SVC expression, we obtain
	\begin{align}\label{SVChua}
		\mathrm{SVC}(\bm{p},\bm{r}_k, \bm{r}) &= \frac{1}{N^2} \Big|\sum_{n=1}^{N} \mathrm{e}^{\mathrm{j}\kappa \bm{p}_n^T(\bm{r} - \bm{r}_k)}\Big|^2 = \frac{1}{N^2} \Big|\sum_{n=1}^{N} \mathrm{e}^{\mathrm{j}\kappa \bm{p}_n^T\bm{\tilde{r}}}\Big|^2 \nonumber\\
		&= \frac{1}{N^2} \Big|\sum_{n=1}^{N} \mathrm{e}^{\mathrm{j}\kappa \big(x_n\tilde{u} + y_n\tilde{v}\big)}\Big|^2 \triangleq \mathrm{SVC}(\bm{p},\bm{\tilde{r}}),
	\end{align}
	where the second equality defines $\bm{\tilde{r}} = \bm{r} - \bm{r}_k = [\tilde{u}, \tilde{v}]^T$, $\tilde{u} = u - u_k$, and $\tilde{v} = v - v_k$. Since $u, u_k\in [-u_{\max}, u_{\max}]$, we have $\tilde{u} \in [-2u_{\max}, 2u_{\max}]$. Similarly $\tilde{v}\in [-2v_{\max}, 2v_{\max}]$, thus yielding $\bm{\tilde{r}} \in [-2u_{\max}, 2u_{\max}] \times [-2v_{\max}, 2v_{\max}]$. For notational convenience, we write $\mathrm{SVC}(\bm{p},\bm{\tilde{r}})$ for the expression in \eqref{SVChua}. Observing \eqref{SVChua}, we obtain
	\begin{align}
		\mathrm{SVC}(\bm{p}, -\bm{\tilde{r}}) = \frac{1}{N^2} \Big|\sum_{n=1}^{N} \mathrm{e}^{-\mathrm{j}\kappa \bm{p}_n^T\bm{\tilde{r}}}\Big|^2 = \mathrm{SVC}(\bm{p},\bm{\tilde{r}}),
	\end{align}
	which indicates that the SVC is an even function with respect to $\bm{\tilde{r}}$. Moreover, when $\bm{r}_k = \bm{r}$, $\mathrm{SVC}(\bm{0}_2) = 1$, corresponding to the main lobe of the SVC. Our goal is to suppress the sidelobes, thereby requiring the exclusion of the main lobe region. We define the main lobe protection thresholds for $u$ and $v$ as $\gamma_u$ and $\gamma_v$, respectively. Hence, the region where we suppress the sidelobes can be expressed as
	\begin{align}
		\mathcal{D} =& \big\{ (\tilde{u}, \tilde{v}) \mid \tilde{u} \in [0, 2u_{\max}], \tilde{v} \in [-2v_{\max}, 2v_{\max}] \big\}\nonumber\\
		&\setminus \big( [0, \gamma_u] \times [-\gamma_v, \gamma_v] \big),
	\end{align}
	The specific diagram can be seen in \Cref{Show}.
	The region enclosed by the red dashed box denotes the main lobe, serving as the protected area, whereas the white-shaded area represents the optimization region $\mathcal{D}$. Our optimization objective is to minimize the values across this region as much as possible to suppress the peak sidelobe level, thereby effectively mitigating mutual interference among multiple targets.
	\begin{figure}[t]
		\vspace{-0.5em}
		\centering
		\includegraphics[width=0.75\linewidth]{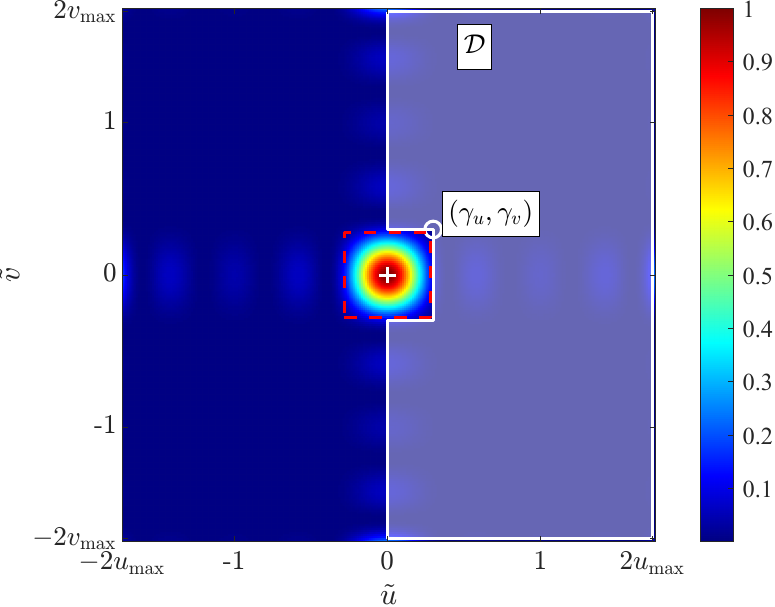}
		\caption{Illustration for the region where we suppress the sidelobes.}
		\label{Show}
		\vspace{-0.5em}
	\end{figure}
	
	Since the estimation performance is affected by both the LB and the SVC, we aim to minimize the maximum SVC over the sidelobe region to suppress the influence of sidelobes on the estimation while guaranteeing that the LB satisfies a specific condition. Therefore, the optimization problem can be formulated as
\begin{subequations}\label{P1}
		\begin{align}
			\mathrm{(P1)}\enspace\min_{\bm{p}}\enspace &\max_{\bm{\tilde{r}}\in\mathcal{D}}\mathrm{SVC}(\bm{p},\bm{\tilde{r}}) \label{SVCrkr}\\
			\mathrm{s.t.} \enspace &\mathrm{LB}(\bm{p}, \bm{\omega})\le \eta \label{CRBC}\\
			& \bm{p}_n \in \mathcal{C}, n = 1, \ldots, N \label{RegionC}\\
			& ||\bm{p}_n - \bm{p}_m || \ge d_{\min}, n\neq m, \label{JianC}
		\end{align}
	\end{subequations}
	where \eqref{CRBC} guarantees that the LB is below a given threshold $\eta$, \eqref{RegionC} specifies the constraint on the movable region, and \eqref{JianC} ensures that the inter-antenna spacing is no less than a minimum distance $d_{\min} = \frac{\lambda}{2}$ to avoid significant mutual coupling. The continuous minimax objective and the non-convex constraints \eqref{CRBC} and \eqref{JianC} make (P1) difficult to solve directly. In the following subsection, we transform the optimization problem.
	
	\vspace{-0.5em}
	\subsection{Problem Transformation}
	To facilitate the construction of the optimization problem, we sample $\mathcal{D}$ discretely to obtain $\mathcal{R} = \{\bm{\tilde{r}}_{l} \}_{l=1}^{L} =\{ [\tilde{u}_l, \tilde{v}_l]^T\}_{l=1}^{L}$, where $\bm{\tilde{r}}_l\in\mathcal{D}$ and $L$ denotes the total number of sampling points. Consequently, the optimization problem can be transformed into
	\begin{subequations}\label{P2}
		\begin{align}
			\mathrm{(P2)}\enspace\min_{\bm{p}, \tau}\enspace &\tau \\
			\mathrm{s.t.} \enspace &\mathrm{SVC}(\bm{p}, \bm{\tilde{r}}_l) \le \tau,  l= 1, \ldots, L \label{SVCP2}\\
			&\eqref{CRBC}, \eqref{RegionC}, \eqref{JianC},
		\end{align}
	\end{subequations}
	where $\tau$ is an auxiliary variable we introduce as an upper bound for all discrete SVC values.
	
	Since the LB is proportional to the reciprocal of $\var(\bm{y}) +\var(\bm{x})$ based on Theorem \ref{theorem1}, we have
	\begin{align}
		\mathrm{LB}(\bm{p}, \bm{\omega})\le \eta\enspace \Leftrightarrow \enspace\var(\bm{y}) +\var(\bm{x})\ge \frac{\sum_{k=1}^{K} {4\varpi_k}}{\eta} \triangleq\delta,
	\end{align}
	where $\delta$ is the corresponding threshold. Based on this, the optimization problem can be further transformed into
	\begin{subequations}\label{P3}
		\begin{align}
			\mathrm{(P3)}\enspace\min_{\bm{p}, \tau}\enspace &\tau \\
			\mathrm{s.t.} \enspace &\mathrm{SVC}(\bm{p}, \bm{\tilde{r}}_l) \le \tau, l= 1, \ldots, L \label{SVCP3}\\
			&\var(\bm{y}) +\var(\bm{x})\ge \delta \label{VarP3}\\
			&\eqref{RegionC}, \eqref{JianC}.
		\end{align}
	\end{subequations}
	Although the problem is simplified into (P3), it remains highly non-convex and intractable to solve directly due to the inherent non-convexity of the SVC function along with the complicated geometric constraints. To tackle this challenging optimization problem, we develop an SCA-based AO algorithm in the subsequent section.
	\vspace{-0.5em}
	\section{Proposed SCA-based AO Algorithm}
	
	In this section, we propose an SCA-based AO algorithm. We iteratively optimize the $\bm{y}$ coordinates of the antennas by utilizing the SCA technique with fixed $\bm{x}$ coordinates, and similarly optimize the $\bm{x}$ coordinates with fixed $\bm{y}$ coordinates. Because the original problem contains highly non-convex fitness objectives and geometric constraints, we construct globally tight convex surrogate functions for the non-convex terms at each iteration using Taylor expansion and second-order scaling. It is worth noting that the corresponding Lipschitz constant is rigorously derived to bound the absolute maximum value of the Hessian matrix. Through this AO approach, we decouple the originally intractable non-convex position optimization problem and transform it into a series of readily solvable convex subproblems for iterative execution until convergence.
	
	By explicitly expressing the MA position constraints,
	problem (P3) can be rewritten as
	\begin{subequations}\label{P4}
		\begin{align}
			\!\!\!\!\mathrm{(P4)}\enspace
			\min_{\bm{x},\bm{y},\tau}\enspace &\tau\\
			\mathrm{s.t.}\enspace
			&\eqref{VarP3},\eqref{SVCP3}\\
			&0\le x_n\le A_x,\enspace n=1,\ldots,N \label{Ax}\\
			&0\le y_n\le A_y,\enspace n=1,\ldots,N \label{Ay}\\
			&\sqrt{(x_n-x_m)^2+(y_n-y_m)^2}
			\ge d_{\min},\enspace n\neq m. \label{JianP4}
		\end{align}
	\end{subequations}
	
	\subsection{Optimizing \texorpdfstring{$\bm{x}$}{w} with Fixed \texorpdfstring{$\bm{y}$}{w}}
	In this subsection, we optimize $\bm{x}$ with fixed $\bm{y}$. Thus, the problem (P4) can be simplified as
	\begin{subequations}\label{P5}
		\begin{align}
			\mathrm{(P5)}\enspace\min_{\bm{x}, \tau}\enspace &\tau \\
			\mathrm{s.t.} \enspace &\mathrm{SVC}(\bm{x}, \bm{\tilde{r}}_l) \le \tau, l= 1, \ldots, L \label{SVCP5}\\
			&\var(\bm{x})\ge \delta_x \label{varP5}\\
			& \eqref{Ax}, \eqref{JianP4},
		\end{align}
	\end{subequations}
	where $\delta_x $ represents the threshold in the $x$ axis. Notice that in problem (P4), \eqref{SVCP5}, \eqref{varP5}, and \eqref{JianP4} are non-convex. We adopt the SCA algorithm to solve them. First, for \eqref{SVCP5}, when fixing $\bm{y}$ we have
	\begin{align}
		\mathrm{SVC}(\bm{x},\bm{\tilde{r}}_l) &= \frac{1}{N^2} \Big|\sum_{n=1}^{N} \mathrm{e}^{\mathrm{j}\kappa (x_n\tilde{u}_l + y_n\tilde{v}_l)}\Big|^2\\
		& = \frac{1}{N^2} \sum_{n=1}^{N} \sum_{m=1}^{N} \mathrm{e}^{\mathrm{j}\kappa \big[  (x_n\tilde{u}_l + y_n\tilde{v}_l) - (x_m\tilde{u}_l + y_m\tilde{v}_l) \big]}\nonumber\\
		& = \frac{1}{N^2} \sum_{n=1}^{N} \sum_{m=1}^{N} \cos (\varphi_{n,m,l}),\nonumber
	\end{align}
	where $\varphi_{n,m,l} = \kappa\big[ (x_n \tilde{u}_l + y_n \tilde{v}_l) - (x_m \tilde{u}_l + y_m \tilde{v}_l) \big]$. At the $i$-th iteration, a quadratic surrogate upper bound is constructed as
	\begin{align}
		\mathrm{SVC}(\bm{x}, \bm{\tilde{r}}_l)&\le \mathrm{SVC}(\bm{x}^{(i)}, \bm{\tilde{r}}_l)+\nabla_{\bm{x}} \mathrm{SVC}(\bm{x}^{(i)}, \bm{\tilde{r}}_l)^T(\bm{x}- \bm{x}^{(i)})\nonumber\\
		&+\frac{Z_{x,l}}{2}||\bm{x}-\bm{x}^{(i)}||_2^2 \triangleq \overline{\mathrm{SVC}}(\bm{x};\bm{x}^{(i)}, \bm{\tilde{r}}_l),
	\end{align}
	where $Z_{x,l}$ represents the Lipschitz constant. The $n$-th element of the gradient vector $\nabla_{\bm{x}} \mathrm{SVC}(\bm{x}^{(i)}, \bm{\tilde{r}}_l)$ can be expressed as
	\begin{align}
		\big[	\nabla_{\bm{x}} \mathrm{SVC}(\bm{x}^{(i)}, \bm{\tilde{r}}_l)\big]_{n} &=  \frac{\partial  \mathrm{SVC}(\bm{x}^{(i)},\bm{\tilde{r}}_l)}{ \partial x_n} \nonumber\\
		&=
		-\frac{2\kappa \tilde{u}_l}{N^2} \sum_{m=1}^{N} \sin (\varphi_{n,m,l}^{(i)}).
	\end{align}
	Furthermore, to guarantee that $\overline{\mathrm{SVC}}(\bm{x};\bm{x}^{(i)}, \bm{\tilde{r}}_l)$ is a global upper bound of $\mathrm{SVC}(\bm{x}, \bm{\tilde{r}}_l)$, $Z_{x,l}$ must satisfy
	\begin{align}
		Z_{x,l} \bm{I}_N\succeq \nabla_{\bm{x}}^2 \mathrm{SVC}(\bm{x}^{(i)}, \bm{\tilde{r}}_l),
	\end{align}
	where $ \nabla_{\bm{x}}^2 \mathrm{SVC}(\bm{x}^{(i)}, \bm{\tilde{r}}_l)$ is the Hessian matrix with respect to $\bm{x}^{(i)}$. Its element can be given as \eqref{nn} and \eqref{nm}.
	\begin{figure*}
		\vspace{-0.5em}
		\begin{align}\label{nn}
		\big[\nabla_{\bm{x}}^2 \mathrm{SVC}(\bm{x}^{(i)}, \bm{\tilde{r}}_l)\big]_{n, n} &= \frac{\partial^2 \mathrm{SVC}}{\partial x_n^2} = \frac{\partial}{\partial x_n} \Big[ -\frac{2\kappa \tilde{u}_l}{N^2} \sum_{m \neq n}^{N} \sin(\varphi_{n,m,l}^{(i)}) \Big] = -\frac{2\kappa^2 \tilde{u}_l^2}{N^2} \sum_{m \neq n}^{N} \cos(\varphi_{n,m,l}^{(i)}),
\\\label{nm}
		\big[\nabla_{\bm{x}}^2 \mathrm{SVC}(\bm{x}^{(i)}, \bm{\tilde{r}}_l)\big]_{n, m} &= \frac{\partial^2 \mathrm{SVC}}{\partial x_n \partial x_m} = \frac{\partial}{\partial x_m} \Big[ -\frac{2\kappa \tilde{u}_l}{N^2} \sin(\varphi_{n,m,l}^{(i)}) \Big] = \frac{2\kappa^2 \tilde{u}_l^2}{N^2} \cos(\varphi_{n,m,l}^{(i)}) , \enspace n \neq m.
	\end{align}
	\vspace{-0.5em}
		\hrule
		\vspace{-0.5em}
	\end{figure*}
	Defining the symmetric cosine weight matrix $\bm{B}_l^{(i)} \in \mathbb{R}^{N \times N}$ with elements $[\bm{B}_l^{(i)}]_{n,m} = \cos(\varphi_{n,m,l}^{(i)})$, this Hessian matrix $\nabla_{\bm{x}}^2 \mathrm{SVC}(\bm{x}^{(i)}, \bm{\tilde{r}}_l) \in \mathbb{R}^{N \times N}$ can be written in the following matrix form
	\begin{align}
		\nabla_{\bm{x}}^2 \mathrm{SVC}(\bm{x}^{(i)}, \bm{\tilde{r}}_l) = \frac{2\kappa^2 \tilde{u}_l^2}{N^2} \Big[ \bm{B}_l^{(i)} - \diag\big(\bm{B}_l^{(i)}\bm{1}_N\big) \Big].
	\end{align}
	To construct a tractable convex surrogate function and guarantee the monotonic convergence of the SCA algorithm, it is essential to determine the upper bound of the objective's curvature. Next, we obtain a position-independent Lipschitz constant via Theorem \ref{theorem2}.
	\begin{theorem}\label{theorem2}
		The Hessian matrix of $\mathrm{SVC}(\bm{x}^{(i)}, \bm{\tilde{r}}_l)$ with respect to $\bm{x}^{(i)}$ is globally upper-bounded by a diagonal matrix
		\begin{align}
			\nabla_{\bm{x}}^2 \mathrm{SVC}(\bm{x}^{(i)}, \bm{\tilde{r}}_l) \preceq \frac{2\kappa^2 \tilde{u}_l^2}{N} \bm{I}_N.
		\end{align}
		Based on this, the Lipschitz constant with respect to $\nabla_{\bm{x}}^2 \mathrm{SVC}(\bm{x}^{(i)}, \bm{\tilde{r}}_l)$ is $Z_{x,l} = \frac{2\kappa^2 \tilde{u}_l^2}{N}$.
	\end{theorem}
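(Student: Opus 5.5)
The approach is to exploit the Laplacian-like structure of the matrix form of the Hessian derived just above the statement. Write
\begin{align}
\nabla_{\bm{x}}^2 \mathrm{SVC}(\bm{x}^{(i)}, \bm{\tilde{r}}_l) = -\frac{2\kappa^2 \tilde{u}_l^2}{N^2}\,\bm{L}_l^{(i)},\qquad \bm{L}_l^{(i)} \triangleq \diag\big(\bm{B}_l^{(i)}\bm{1}_N\big)-\bm{B}_l^{(i)} .
\end{align}
Here $\bm{L}_l^{(i)}$ is the Laplacian of a complete graph on $N$ nodes whose (signed) edge weights are $\cos(\varphi_{n,m,l}^{(i)})$. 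The key identity I will use is the standard Laplacian quadratic form, valid for any $\bm{z}\in\mathbb{R}^N$ because $\bm{B}_l^{(i)}$ is symmetric:
\begin{align}
\bm{z}^T\bm{L}_l^{(i)}\bm{z} = \frac{1}{2}\sum_{n=1}^{N}\sum_{m=1}^{N}\cos(\varphi_{n,m,l}^{(i)})\,(z_n-z_m)^2 .
\end{align}
I will verify it by expanding $(z_n-z_m)^2$ and using the symmetry $\cos(\varphi_{n,m,l}^{(i)})=\cos(\varphi_{m,n,l}^{(i)})$. The diagonal terms $\cos(0)=1$ drop out automatically, since $(z_n-z_n)^2=0$.

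Next I bound the sign-indefinite weights crudely by $\cos(\cdot)\ge -1$. This gives
\begin{align}
-\bm{z}^T\bm{L}_l^{(i)}\bm{z}\le \frac{1}{2}\sum_{n,m}(z_n-z_m)^2 = N\|\bm{z}\|_2^2-\Big(\sum_{n}z_n\Big)^2\le N\|\bm{z}\|_2^2 .
\end{align}
In matrix terms, $-\bm{L}_l^{(i)}\preceq N\bm{I}_N-\bm{1}_N\bm{1}_N^T\preceq N\bm{I}_N$. Multiplying by the nonnegative factor $2\kappa^2\tilde{u}_l^2/N^2$ yields $\nabla_{\bm{x}}^2\mathrm{SVC}\preceq \frac{2\kappa^2\tilde{u}_l^2}{N}\bm{I}_N$. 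This bound holds for every $\bm{x}$, because nothing in it depends on the phases $\varphi_{n,m,l}^{(i)}$.

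The main obstacle is avoiding a lossy triangle-type argument. If one instead bounds $\bm{B}_l^{(i)}$ and $-\diag(\bm{B}_l^{(i)}\bm{1}_N)$ separately, one gets:
\begin{itemize}
\item $\bm{B}_l^{(i)}=\bm{c}\bm{c}^T+\bm{s}\bm{s}^T\preceq N\bm{I}_N$, where $\bm{c}$ and $\bm{s}$ are the cosine and sine vectors of the phases;
\item $-\diag(\bm{B}_l^{(i)}\bm{1}_N)\preceq N\bm{I}_N$.
\end{itemize}
Adding these gives only $2N$, which is twice the claimed constant. The Laplacian rewriting is exactly what recovers the factor $N$.

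Finally, to justify calling $Z_{x,l}$ the Lipschitz constant and to get the surrogate upper bound, I apply Taylor's theorem with integral remainder along the segment from $\bm{x}^{(i)}$ to $\bm{x}$. Since the Hessian bound holds at every point of that segment, the quadratic term is at most $\frac{Z_{x,l}}{2}\|\bm{x}-\bm{x}^{(i)}\|_2^2$. Hence $\mathrm{SVC}(\bm{x},\bm{\tilde{r}}_l)\le\overline{\mathrm{SVC}}(\bm{x};\bm{x}^{(i)},\bm{\tilde{r}}_l)$.
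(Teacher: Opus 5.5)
Your proposal is correct and follows essentially the same route as the paper. The paper also writes the quadratic form of $\bm{B}_l^{(i)}-\diag(\bm{B}_l^{(i)}\bm{1}_N)$ (the negative of your Laplacian) as $-\frac{1}{2}\sum_{n,m}(c_n-c_m)^2\cos(\varphi_{n,m,l}^{(i)})$, bounds $\cos(\cdot)\ge -1$, and drops $(\sum_n c_n)^2$ to reach $N\|\bm{c}\|_2^2$; your closing Taylor-remainder step, which uses that the bound holds at every point of the segment, makes explicit the global-surrogate justification that the paper leaves implicit.
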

	\begin{proof}
		Let the core structure of the Hessian matrix be denoted as $\bm{M}^{(i)} = \bm{B}_l^{(i)} - \mathrm{diag}\big(\bm{B}_l^{(i)}\bm{1}_N\big)$. For any real non-zero vector $\bm{c} = [q_1, q_2, \ldots, q_N]^T \in \mathbb{R}^{N \times 1}$, the quadratic form of this matrix can be simplified and bounded as follows:
		\begin{align} \bm{c}^T \bm{M}^{(i)} \bm{c} &= -\frac{1}{2} \sum_{n=1}^{N} \sum_{m=1}^{N} (c_n - c_m)^2 \cos(\varphi_{n,m,l}^{(i)})\nonumber \\ &\le \frac{1}{2} \sum_{n=1}^{N} \sum_{m=1}^{N} (c_n - c_m)^2 = N \sum_{n=1}^{N} c_n^2 - \Big( \sum_{n=1}^{N} c_n \Big)^2\nonumber \\ &\le N \|\bm{c}\|_2^2, 
		\end{align}
		where the first inequality holds because $\cos(\cdot) \ge -1$ for any phase difference and the perfect square term satisfies $(q_n - q_m)^2 \ge 0$. The second inequality holds because the squared sum term $\big( \sum_{n=1}^{N} q_n \big)^2 \ge 0$.
		
		According to the Rayleigh Quotient theorem, the bounded quadratic form $\bm{c}^T \bm{M}^{(i)} \bm{c} \le N \|\bm{c}\|_2^2$ strictly proves that the structure matrix satisfies $\bm{M}^{(i)} \preceq N \bm{I}_N$. Applying this structural property to the overall Hessian matrix, we obtain the exact upper bound
		\begin{align}
			\nabla_{\bm{x}}^2 \mathrm{SVC}(\bm{x}^{(i)}, \bm{\tilde{r}}_l) \preceq \frac{2\kappa^2 \tilde{u}_l^2}{N^2} (N \bm{I}_N) = \frac{2\kappa^2 \tilde{u}_l^2}{N} \bm{I}_N.
		\end{align}
		Thus, we can extract the global Lipschitz constant $Z_{x,l}$ as the coefficient of the identity matrix, i.e.,
		$Z_{x,l} = \frac{2\kappa^2 \tilde{u}_l^2}{N},$
		which is independent of MA positions. This completes the proof.
	\end{proof}
	Theorem \ref{theorem2} provides a compact and position-independent Lipschitz constant. This enables us to safely construct the quadratic penalty term in the SCA surrogate function.

	For \eqref{varP5}, we rewrite it into a matrix form
	\begin{align}
		\var(\bm{x}) = \frac{1}{N} \bm{x}^T \bm{J} \bm{x},
	\end{align}
	where $\bm{J} = \bm{I}_N - \frac{1}{N}\bm{1}_N\bm{1}_N^T$. Since $\bm{J}$ is positive semi-definite, it is a convex function with respect to $\bm{x}$. Therefore, its first-order Taylor expansion at $\bm{x}^{(i)}$ serves as the global lower bound function, yielding
	\begin{align} \label{Varxlow}
		\var(\bm{x}) &\ge \underline{\var}(\bm{x}; \bm{x}^{(i)}) \triangleq \var(\bm{x}^{(i)}) + \nabla_{\bm{x}} \var(\bm{x}^{(i)})^T (\bm{x} - \bm{x}^{(i)}) \nonumber\\ &= \frac{1}{N} (\bm{x}^{(i)})^T \bm{J} \bm{x}^{(i)} + \frac{2}{N} (\bm{J} \bm{x}^{(i)})^T (\bm{x} - \bm{x}^{(i)}),  
	\end{align}
	where the second equality holds because $\nabla_{\bm{x}} \var(\bm{x}) = \frac{2}{N} \bm{J} \bm{x}$.
	
	For \eqref{JianP4}, we define
	\begin{align}
		d_{n,m}(\bm{x}) = \sqrt{(x_n - x_m)^2 + (y_n^{(i)} - y_m^{(i)})^2}.
	\end{align}
	Obviously, $d_{n,m}(\bm{x}) $ is a convex function with respect to $\bm{x}$, and its first-order Taylor expansion serves as the global lower bound function \eqref{dmnx}.
	\begin{figure*}
		\vspace{-0.5em}
		\begin{align}\label{dmnx}
			d_{n,m}(\bm{x}) &\ge d_{n,m}(\bm{x}^{(i)}) + \frac{\partial d_{n,m}(\bm{x}^{(i)})}{\partial x_n}  (x_n - x_n^{(i)}) + \frac{\partial d_{n,m}(\bm{x}^{(i)})}{\partial x_m}  (x_m - x_m^{(i)}) = d_{n,m}(\bm{x}^{(i)}) + \frac{x_n^{(i)} - x_m^{(i)}}{d_{n,m}(\bm{x}^{(i)})} \Big[(x_n - x_m) - (x_n^{(i)} - x_m^{(i)}) \Big]\nonumber \\ &= \frac{x_n^{(i)} - x_m^{(i)}}{d_{n,m}(\bm{x}^{(i)})} (x_n - x_m) + \Big[ d_{n,m}(\bm{x}^{(i)}) - \frac{(x_n^{(i)} - x_m^{(i)})^2}{d_{n,m}(\bm{x}^{(i)})} \Big]\triangleq \underline{d}_{n,m}(\bm{x}, \bm{x}^{(i)}) .
		\end{align}
		\hrule
		\vspace{-0.5em}
	\end{figure*}
	Then, the optimization problem when optimizing $\bm{x}$ with fixed $\bm{y}$ can be formulated as
	\begin{subequations}\label{P6}
		\begin{align}
			\mathrm{(P6)}\enspace\min_{\bm{x}, \tau}\enspace &\tau \\
			\mathrm{s.t.} \enspace &\overline{\mathrm{SVC}}(\bm{x}; \bm{x}^{(i)}, \bm{\tilde{r}}_l) \le \tau, l= 1, \ldots, L \label{SVCP6}\\
			&\underline{\var}(\bm{x}; \bm{x}^{(i)})\ge \delta_x \label{varP6}\\
			& 0\le x_n\le A_x,\enspace n=1,\ldots,N \\
			&\underline{d}_{n,m}(\bm{x};\bm{x}^{(i)}) \ge d_{\min}, n\neq  m. \label{JianP6}
		\end{align}
	\end{subequations}
	It is worth noting that problem (P6) is convex and we can solve it using the CVX toolbox in MATLAB \cite{mosek2019}.

	\subsection{Optimizing \texorpdfstring{$\bm{y}$}{w} with Fixed \texorpdfstring{$\bm{x}$}{w}}
	When optimizing $\bm{y}$ with fixed $\bm{x}$, the optimization problem can be simplified as
	\begin{subequations}
		\begin{align}
			\mathrm{(P7)}\enspace\min_{\bm{y}, \tau}\enspace &\tau \\
			\mathrm{s.t.} \enspace &\mathrm{SVC}(\bm{y}, \bm{\tilde{r}}_l) \le \tau, l= 1, \ldots, L \label{SVCP7}\\
			&\var(\bm{y})\ge \delta_y \label{varP7}\\
			& \eqref{Ay}, \eqref{JianP4}.
		\end{align}
	\end{subequations}
	It is worth noting that problems (P7) and (P4) exhibit rotational symmetry with respect to $\bm{x}$ and $\bm{y}$. That is, replacing $\bm{y}$ in (P7) with $\bm{x}$ yields problem (P4). Therefore, we adopt a method similar to optimizing $\bm{x}$ with fixed $\bm{y}$. At the $i$-th iteration, deriving the convex lower bound for \eqref{SVCP7} yields
	\begin{align}
		\mathrm{SVC}(\bm{y}, \bm{\tilde{r}}_l)&\le \mathrm{SVC}(\bm{y}^{(i)}, \bm{\tilde{r}}_l)+\nabla_{\bm{y}} \mathrm{SVC}(\bm{y}^{(i)}, \bm{\tilde{r}}_l)^T(\bm{y}- \bm{y}^{(i)})\nonumber\\
		&+\frac{Z_{y,l}}{2}||\bm{y}-\bm{y}^{(i)}||_2^2 \triangleq \overline{\mathrm{SVC}}(\bm{y};\bm{y}^{(i)}, \bm{\tilde{r}}_l),
	\end{align}
where $\nabla_{\bm{y}} \mathrm{SVC}(\bm{y}^{(i)}, \bm{\tilde{r}}_l)$ denotes the first-order gradient of SVC with respect to $\bm{y}$ and can be expressed as
	\begin{align}
		\big[\nabla_{\bm{y}} \mathrm{SVC}(\bm{y}^{(i)}, \bm{\tilde{r}}_l)\big]_{n} = 
		-\frac{2\kappa \tilde{v}_l}{N^2} \sum_{m=1}^{N} \sin (\varphi_{n,m,l}^{(i)}).
	\end{align}
	$Z_{y,l}$ represents the Lipschitz with respect to $v$ and satisfy 
	\begin{align}
		Z_{y,l} \bm{I}_N\succeq \nabla_{\bm{y}}^2 \mathrm{SVC}(\bm{y}^{(i)}, \bm{\tilde{r}}_l),
	\end{align}
	where $\nabla_{\bm{y}}^2 \mathrm{SVC}(\bm{y}^{(i)}, \bm{\tilde{r}}_l)$ denotes the Hessian matrix of $\mathrm{SVC}(\bm{y}^{(i)}, \bm{\tilde{r}}_l)$ with respect to $\bm{y}$ and can be given by 
	\begin{align}
		\nabla_{\bm{y}}^2 \mathrm{SVC}(\bm{y}^{(i)}, \bm{\tilde{r}}_l) = \frac{2\kappa^2 \tilde{v}_l^2}{N^2} \Big[ \bm{B}_l^{(i)} - \diag\big(\bm{B}_l^{(i)}\bm{1}_N\big) \Big].
	\end{align}
	By exploit Theorem \ref{theorem2}, we can obtain the  Lipschitz constant with respect to $\nabla_{\bm{y}}^2 \mathrm{SVC}(\bm{y}^{(i)}, \bm{\tilde{r}}_l)$, i.e., $Z_{y,l} = \frac{2\kappa^2 \tilde{v}_l^2}{N}$.
	
	Similar to \eqref{Varxlow},
	we derive the convex lower bound for \eqref{varP7} as
	\begin{align}
		\!\!\!\!\!\var(\bm{y}) \!\ge\! \underline{\var}(\bm{y}; \bm{y}^{(i)}\!) 
		\!\triangleq \! \frac{1}{N} {\bm{y}^{(i)}}^T \bm{J} \bm{y}^{(i)} \!+\! \frac{2}{N} (\bm{J} \bm{y}^{(i)})^T (\bm{y} \!-\! \bm{y}^{(i)}).  
	\end{align}
	Similar to \eqref{dmnx}, we derive the convex lower bound for \eqref{JianP4}
	\begin{align}
		d_{n,m}(\bm{y}) &\ge \underline{d}_{n,m}(\bm{y}, \bm{y}^{(i)})
		\\
		&\triangleq  d_{n,m}(\bm{y}^{(i)}) + \frac{y_n^{(i)} - y_m^{(i)}}{d_{n,m}(\bm{y}^{(i)})} \Big[(y_n - y_m) - (y_n^{(i)} - y_m^{(i)}) \Big]\nonumber.
	\end{align}
	Thus, at the $i$-th iteration, the optimization problem can be expressed as
	\begin{subequations}\label{P8}
		\begin{align}
			\mathrm{(P8)}\enspace\min_{\bm{y}, \tau}\enspace &\tau \\
			\mathrm{s.t.} \enspace &\overline{\mathrm{SVC}}(\bm{y}; \bm{y}^{(i)}, \bm{\tilde{r}}_l) \le \tau, l= 1, \ldots, L \label{SVCP8}\\
			&\underline{\var}(\bm{y}; \bm{y}^{(i)})\ge \delta_y \label{varP8}\\
			& 0\le y_n\le A_y,\enspace n=1,\ldots,N  \\
			&\underline{d}_{n,m}(\bm{y};\bm{y}^{(i)}) \ge d_{\min}, n\neq  m, \label{JianP8}
		\end{align}
	\end{subequations}
	where $\delta_y$ represents the threshold, and (P8) is a convex problem that can also be solved using the CVX toolbox in MATLAB \cite{mosek2019}.
	
	To this end, we have presented the complete SCA-Based AO algorithm, whose specific steps are summarized in \ref{Algorithm 1}. Since the constructed surrogate functions provide strict global bounds for the original non-convex terms, the objective value monotonically decreases at each iteration. Given that the objective value is bounded from below, the iterative alternating optimization process is guaranteed to strictly converge to a stationary point. The computational complexity predominantly originates from solving the convex subproblems using the interior point method. Each subproblem involves $N+1$ variables and approximately $L + N^2$ linear and quadratic constraints. Consequently, the computational complexity per iteration is bounded by $\mathcal{O}((L + N^2)^{1.5} N^2)$. The overall computational complexity can be ultimately expressed as $\mathcal{O}(I_{\mathrm{iter}} (L + N^2)^{1.5} N^2)$, where $I_{\mathrm{iter}}$ represents the total number of iterations required for convergence.
	
	\begin{algorithm}[t]
		\renewcommand{\algorithmicrequire}{\textbf{Input }}
		\renewcommand{\algorithmicensure}{\textbf{Output }}
		\caption{Proposed SCA-Based AO Algorithm}
		\label{Algorithm 1}
		\begin{algorithmic}[1]
			\Require
			Number of MAs $N$, number of targets $K$,
			thresholds $\delta_x$ and $\delta_y$,
			movable region $A_x\times A_y$,
			sampling points $\{\bm{\tilde{r}}_l\}_{l=1}^{L}$,
			maximum iteration number $I_{\mathrm{iter}}$,
			and convergence tolerance $\epsilon$.
			
			\State \textbf{Initialize:}
			Set feasible initial MA positions
			$\bm{x}\leftarrow\bm{x}^{(0)}$
			and $\bm{y}\leftarrow\bm{y}^{(0)}$.
			Compute
			$\tau^{(0)}\leftarrow
			\max_{l=1,\ldots,L}
			\mathrm{SVC}(\bm{p}^{(0)},\bm{\tilde{r}}_l)$.
			
			\For{$i=0:I_{\mathrm{iter}}-1$}
			
			\State \textbf{Stage 1: Optimize $\bm{x}$ with fixed $\bm{y}$}
			\State Obtain $\bm{x}^{(i+1)}$
			by solving problem (P6).
			\State $\bm{x}\leftarrow\bm{x}^{(i+1)}$.
			
			\State \textbf{Stage 2: Optimize $\bm{y}$ with fixed $\bm{x}$}
			\State Obtain $\bm{y}^{(i+1)}$
			by solving problem (P8).
			\State $\bm{y}\leftarrow\bm{y}^{(i+1)}$.
			
			\State Update
			$\bm{p}^{(i+1)}
			\leftarrow
			[(\bm{x}^{(i+1)})^T,
			(\bm{y}^{(i+1)})^T]^T$.
			
			\State Compute
			$\tau^{(i+1)}
			\leftarrow
			\max_{l=1,\ldots,L}
			\mathrm{SVC}(\bm{p}^{(i+1)},\bm{\tilde{r}}_l)$.
			
			\If{$|\tau^{(i+1)}-\tau^{(i)}|\leq\epsilon$}
			\State \textbf{break}
			\EndIf
			
			\EndFor
			
			\Ensure
			Optimized MA positions
			$\bm{x}^{\star}\leftarrow\bm{x}^{(i+1)}$
			and
			$\bm{y}^{\star}\leftarrow\bm{y}^{(i+1)}$.
		\end{algorithmic}
	\end{algorithm}
	
	\section{Simulation Results}
	In this section, we evaluate the performance of the proposed algorithm through two comprehensive groups of numerical simulations. In the first group, the target directions are randomly generated initially but remain fixed across all Monte Carlo realizations. Under this setup, we illustrate the optimized 2D antenna positions and the corresponding MUSIC pseudo-spectra for various array configurations. Furthermore, we evaluate the RMSE and the LB of different arrays with respect to the varying SNR, the number of antennas $N$, the size of the movable region $A_x \times A_y$, and the number of targets $K$. In the second group of simulations, the target directions are dynamically and independently generated for each Monte Carlo trial. Based on this dynamic setting, we investigate the estimation success rate, alongside the RMSE and LB of different arrays, as functions of the varying SNR, $N$, $A_x \times A_y$, and $K$.
	The parameters are denoted as follows. The RMSE (deg) is given by
	\begin{align}
		\mathrm{RMSE} = \sqrt{\frac{1}{QK}\sum_{q=1}^{Q}\sum_{k=1}^{K}\big[(\theta_k-\hat{\theta}_{q,k})^2 + (\phi_k-\hat{\phi}_{q,k})^2 \big]},
	\end{align}
	where $\hat{\theta}_{q,k}$ and $\hat{\phi}_{q,k}$ represent the estimated azimuth and elevation angles at the $q$-th Monte Carlo simulation of the $k$-th target. $Q$ denotes the total number of the Monte Carlo simulations. To resolve the permutation ambiguity, the estimated
	DOAs are matched to the true target directions by
	minimizing the sum of squared angular errors over
	all possible one-to-one assignments. The matched
	estimates are then used to calculate the RMSE and
	estimation success rate.
	The SNR (dB) is given by $\mathrm{SNR} = 10\lg \frac{\sigma_k^2}{\sigma_n^2}$.
	Without loss of generality, we set $Q = 500$ and the signal power as $\sigma_k^2 =1$ for $k =1,\ldots,K$. The directions of the targets are generated randomly by
		$\theta_k \sim \mathcal{U}\big(\frac{\pi}{6}, \frac{5\pi}{6}\big)$ and $
		\phi_k \sim \mathcal{U}\big(\frac{\pi}{6}, \frac{5\pi}{6}\big)$.
	Thus, $u_{\max} = v_{\max} = \frac{\sqrt{3}}{2}$. To effectively protect the main lobe and suppress the grating lobes, we set $\gamma_u = \gamma_v = 0.2$. The convergence threshold in Algorithm \ref{Algorithm 1} is set as $\epsilon = 10^{-4}$.
	To simulate realistic scenarios, a minimum spatial separation must be maintained between any two distinct target directions. Specifically, they are required to satisfy the following distance constraint in the directional space
	\begin{align}
		\sqrt{(\theta_k - \theta_{k'})^2 + (\phi_k - \phi_{k'})^2} \ge \Delta, 
	\end{align}
	where the minimum separation threshold $\Delta$ is set to $10^{\circ}$.
	Unless otherwise specified, the receiver is equipped with $N = 25$ MAs operating under a movable region of $A_x = A_y = 8 \lambda$. Furthermore, we collect $T = 20$ snapshots for each observation under an SNR of 10 $\mathrm{dB}$ for estimating  $K = 4$ targets. For performance comparison, we consider the following baseline schemes:
	\begin{itemize}
		\item \textbf{PMA}: The MA positions are optimized by the proposed algorithm.
		\item \textbf{CMA}: The MA positions are optimized by the conventional algorithm in \cite{maMovableAntennaEnhanced2024} without considering multi-target coupling.
		\item \textbf{URA}: The conventional uniform rectangular array (URA) with half-wavelength spacing between adjacent antennas as shown in \Cref{Fig2D}.
		\item \textbf{USA}: The uniform sparse array (USA) as shown in \Cref{Fig2D} with inter-element spacings of $d_x = \frac{A_x}{\sqrt{N}-1}$ and $d_y = \frac{A_y}{\sqrt{N}-1}$ in the $x$- and $y$-axis, respectively.
	\end{itemize}
	
	\begin{figure}[t]
		\vspace{-0.5em}
		\centering
		\includegraphics[width=0.75\linewidth]{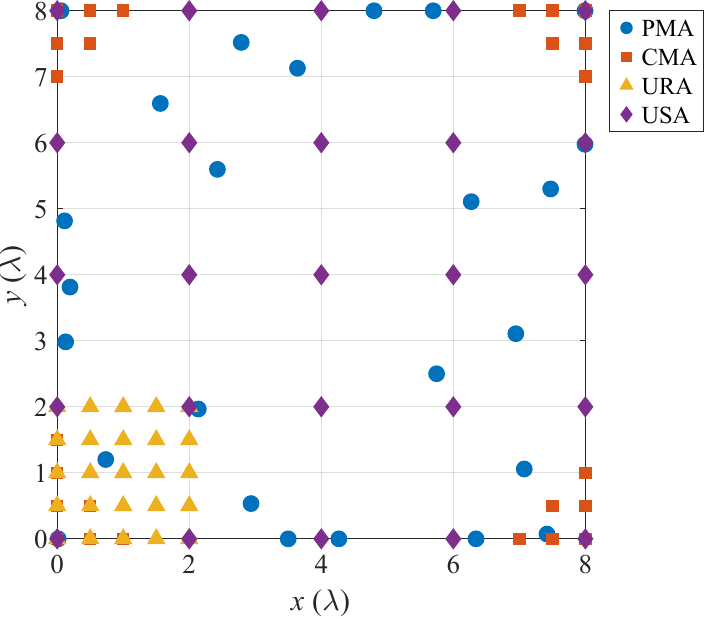}
		\caption{The antenna positions of different array configurations.}
		\label{Fig2D}
		\vspace{-0.5em}
	\end{figure}
	
	\subsection{Simulations with Fixed Targets}
	\begin{figure*}[t]
		\vspace{-0.5em}
		\centering
		\subfloat[PMA]{
			\includegraphics[width=0.22\textwidth]{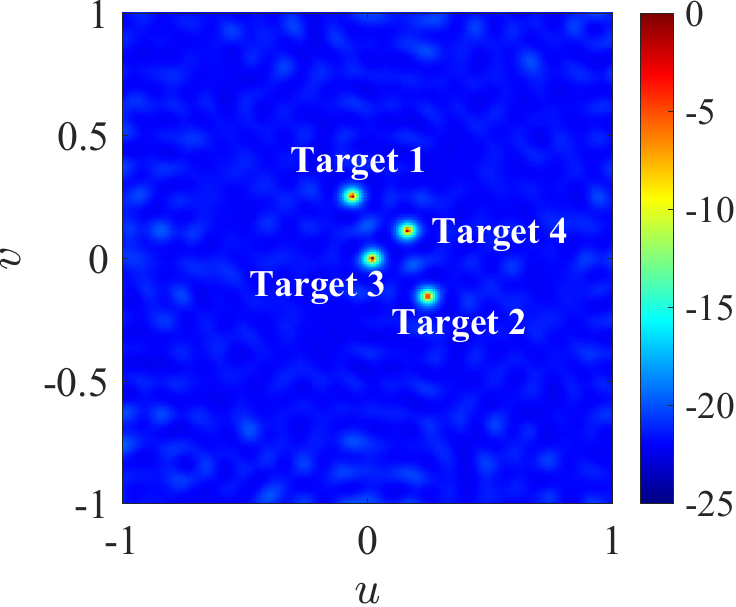}
			\label{PMA}
		}
		\hfill
		\subfloat[CMA]{
			\includegraphics[width=0.22\textwidth]{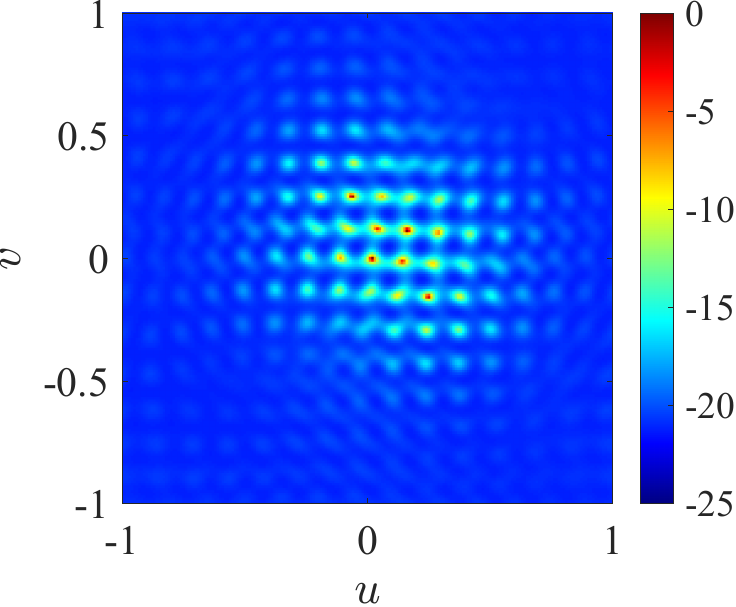}
			\label{CMA}
		}
		\hfill
		\subfloat[URA]{
			\includegraphics[width=0.22\textwidth]{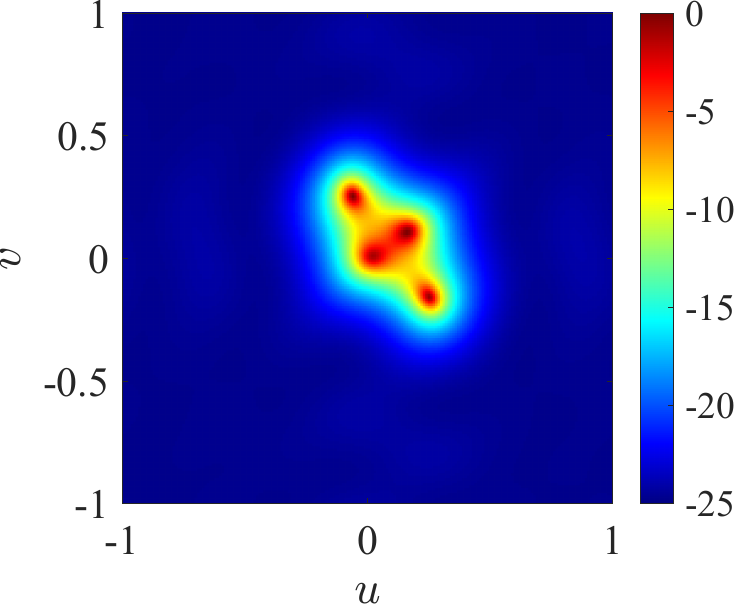}
			\label{URA}
		}
		\hfill
		\subfloat[USA]{
			\includegraphics[width=0.22\textwidth]{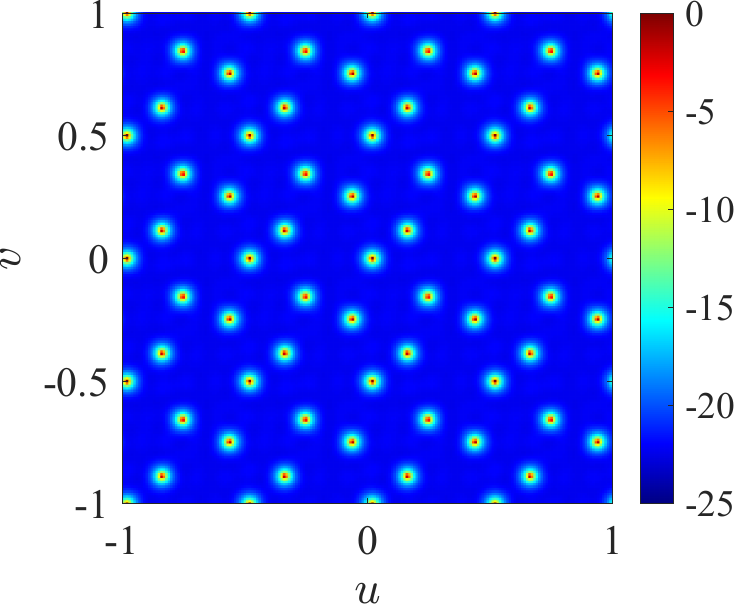}
			\label{USA}
		}
		\vspace{-0.5em}
		\caption{MUSIC spatial spectrum of different array configurations.}
		\label{2d_spectrum_1x4}
		\vspace{-0.5em}
	\end{figure*}
	
	\begin{figure}[t]
		\centering
		\includegraphics[width=0.75\linewidth]{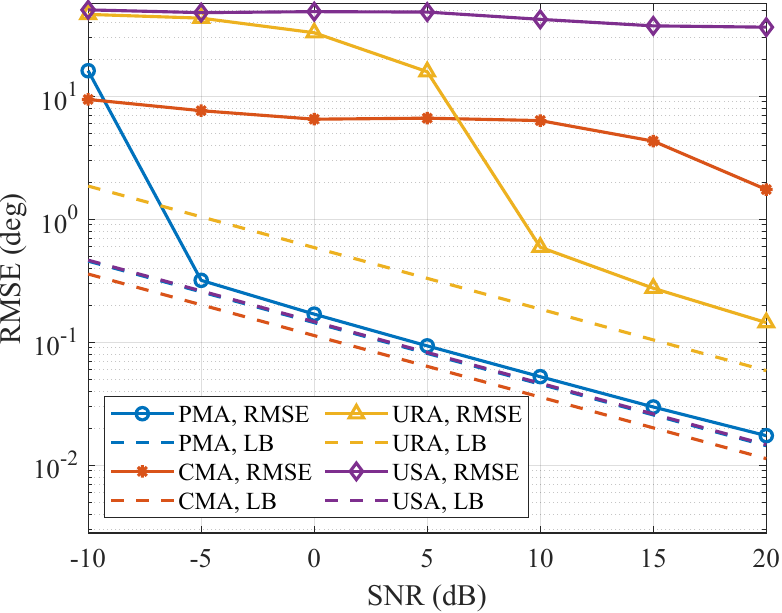}
		\caption{RMSE versus SNR.}
		\label{RMSE_SNR_2D}
		\vspace{-0.5em}
	\end{figure}
	
%
	\begin{figure}[t]
		\centering
		\includegraphics[width=0.75\linewidth]{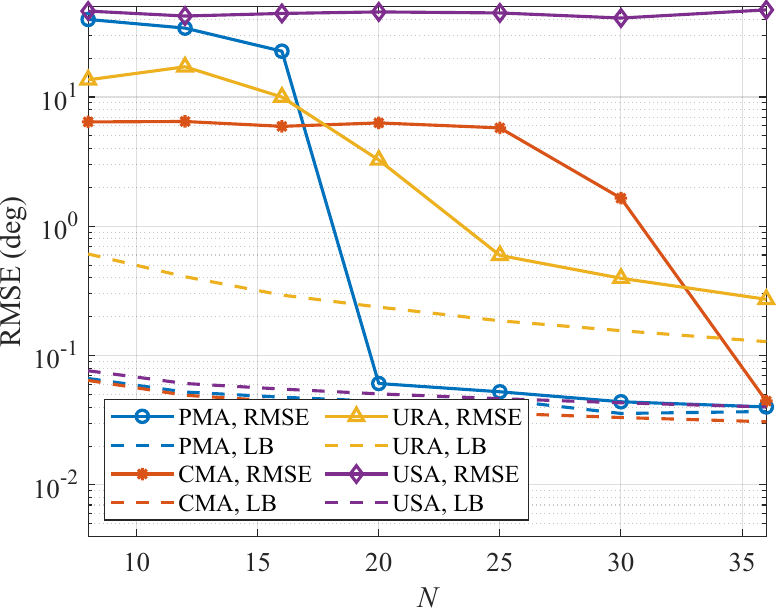}
		\caption{RMSE versus the number of MAs.}
		\label{RMSE_N}
		\vspace{-0.5em}
	\end{figure}
	
	\begin{figure}[t]
		\centering
		\includegraphics[width=0.78\linewidth]{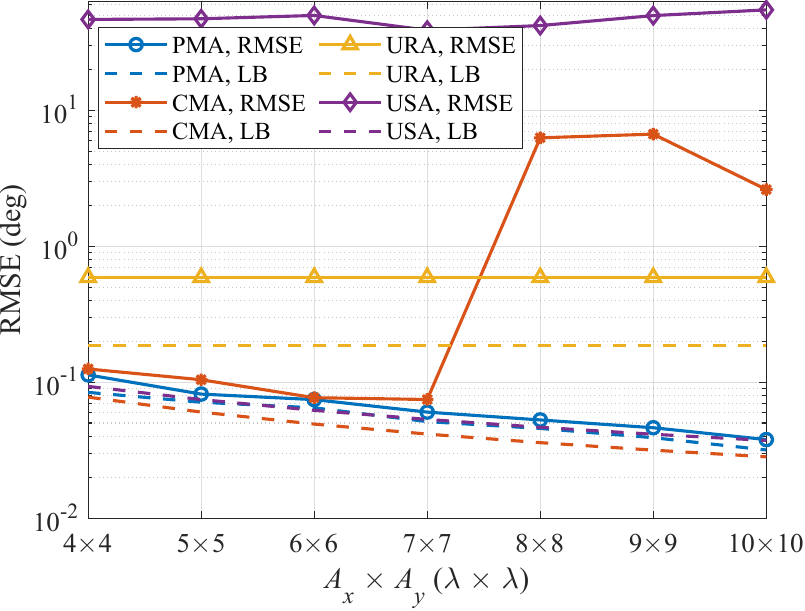}
		\caption{RMSE versus the size of the movable region.}
		\label{RMSE_L}
		\vspace{-0.5em}
	\end{figure}
	
		\begin{figure}[t]
		\centering
		\includegraphics[width=0.75\linewidth]{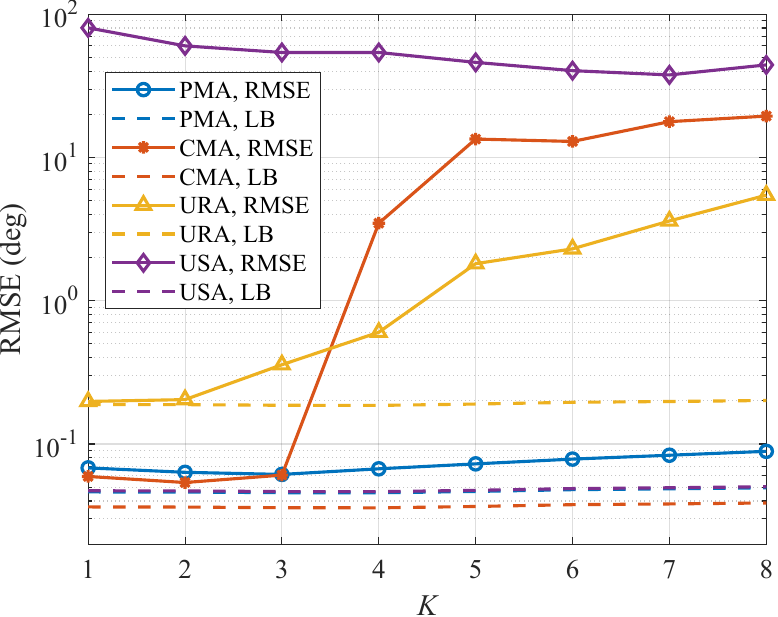}
		\caption{RMSE versus the number of targets.}
		\label{RMSE_K}
		\vspace{-0.5em}
	\end{figure}
	In this subsection, we focus on the simulations with the fixed random targets to demonstrate the superiority of the proposed algorithm. The directions of the targets are randomly generated as $\bm{\theta}  = [75.45, 98.95, 90.13, 83.52]^T,
	\bm{\phi}  = [93.78, 75.59, 88.86, 80.61]^T.$

	\Cref{Fig2D} illustrates the MA positions for different array configurations. For the conventional URA, the antenna spacing along both axes is fixed at $\lambda/2$. Consequently, its total aperture size is restricted to $2\lambda \times 2\lambda$, which occupies merely one-sixteenth of the available movable region. The USA evenly occupies the entire movable region with an enlarged antenna spacing of $2\lambda$ along both axes. As a result of naively minimizing the CRB, the antennas of the CMA are predominantly distributed at the four corners of the movable region, leaving a massive spatial hole in the center. In contrast, the antennas of the proposed PMA are scattered more effectively across the entire movable region, significantly reducing the size of the central spatial hole.
	
	\Cref{2d_spectrum_1x4} depicts the 2D MUSIC spatial spectrum of the different array configurations during a single Monte Carlo trial. As clearly observed in \Cref{PMA}, the spatial spectrum of the proposed PMA yields four distinct and sharp peaks corresponding to the actual targets, accompanied by almost no severe sidelobes. The target directions can be successfully estimated by simply identifying the four highest peaks. Observing \Cref{CMA}, multiple high peaks emerge across the spatial domain. Among them, seven prominent peaks could easily be misinterpreted as potential target directions. This phenomenon occurs because the massive spatial hole in the CMA configuration induces high sidelobes in the SVC, which exacerbates mutual interference among targets and leads to potentially significant estimation errors. In \Cref{URA}, four peaks corresponding to the true target directions are also visible. However, these peaks are considerably less sharp than those of the PMA. Furthermore, the spectrum valley between the third and fourth targets remains excessively high, posing a severe challenge in resolving these two closely spaced targets. Finally, observing the USA configuration, numerous peaks appear periodically across the spatial spectrum. This periodicity arises because the large antenna spacing of the USA generates severe periodic grating lobes in the SVC, making it fundamentally impossible to distinguish the true targets from the spurious ones.
	
	\Cref{RMSE_SNR_2D} illustrates the RMSE and LB of different array configurations versus the SNR. It can be observed that the RMSE of the proposed PMA approaches a low LB at an SNR of -5 dB. On the one hand, the PMA minimizes the sidelobe levels, thereby mitigating mutual interference among multiple targets. On the other hand, the PMA guarantees that the LB does not exceed a predefined threshold, maintaining the RMSE at a comparatively low level. Although the CMA explicitly minimizes the LB, its configuration suffers from a massive spatial hole in the center of the array. This leads to severe mutual interference among targets, making it incapable of successfully estimating the DOAs. Similarly, the USA generates periodic grating lobes, which entirely precludes accurate DOA estimation. For the URA, its failure to fully exploit the available aperture results in an LB significantly higher than those of the PMA and CMA, leading to lower spatial resolution. Consequently, the URA fails to estimate the DOAs at low SNRs and only approaches the LB when the SNR exceeds 10 dB.
	
	\Cref{RMSE_N} demonstrates the RMSE of different arrays as a function of the number of antennas $N$. As the number of antennas increases, the RMSEs of the PMA, CMA, and URA consistently decrease. The PMA can approach the CRB with as few as $ N = 20$, whereas the CMA requires $N = 36$ to achieve similar convergence. On the one hand, an increased number of antennas inherently enhances the estimation performance. On the other hand, for a fixed movable region, a larger number of antennas effectively reduces the size of spatial holes and subsequent inter-target interference, which further contributes to the performance gain. The URA only approaches the CRB when $N = 25$, because its limited physical aperture at smaller values of $N$ yields insufficient target resolution. The USA remains incapable of estimating DOAs since its antenna spacing is strictly larger than $\lambda/2$. Thus, accurate DOA estimation with the USA becomes feasible only if its spacing decreases and it degenerates into a URA configuration.
	
	\Cref{RMSE_L} displays the RMSE of different arrays as the size of the movable region varies. As the movable antenna region expands, the LBs of the PMA, CMA, and USA decrease accordingly. Since the deterministic antenna positions of the URA are strictly independent of the movable region size, its LB and RMSE remain constant. The USA completely fails to estimate the DOAs due to severe grating lobes. The RMSE of the proposed PMA consistently approaches the CRB across all evaluated region sizes. Conversely, the CMA can only estimate the DOAs successfully within a relatively small movable region. This limitation occurs because an enlarged movable region exacerbates the massive spatial holes in the CMA configuration, thereby intensifying the mutual interference among targets.
	
	\Cref{RMSE_K} presents the RMSE of different arrays as a function of the number of targets $K$. Except for the USA, which fails to estimate DOAs regardless of the target count, the estimation performance of the PMA, CMA, and URA inevitably deteriorates as the number of targets increases. The PMA consistently approaches the LB across all evaluated target quantities. The CMA approaches the LB only when the number of targets is small, and its performance degrades sharply when $K > 3$, rendering it incapable of estimating the DOAs. This sharp deterioration is attributed to the fact that a larger number of targets significantly amplifies the mutual interference induced by the spatial holes in the CMA configuration. Similarly, the URA only approaches the LB under a small target count and experiences severe performance degradation when $K > 5$. This limitation arises because the URA possesses a restricted physical aperture, which inherently yields a low angular resolution that is insufficient to resolve a larger number of target directions.
	\vspace{-0.5em}
	\subsection{Simulations with Random Targets}
	
	\begin{figure}[t]
		\vspace{-0.5em}
		\centering
		\includegraphics[width=0.75\linewidth]{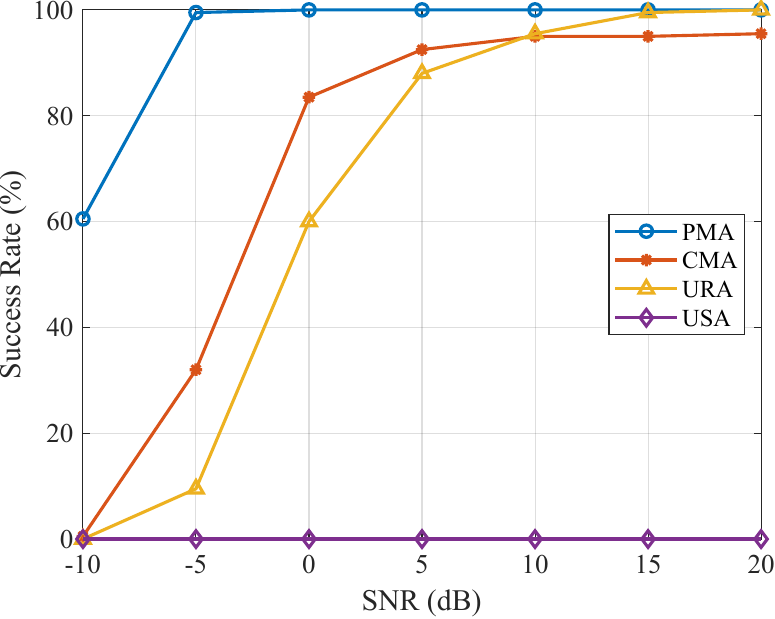}
		\caption{Success rate versus SNR.}
		\label{Rate_SNR_2D}
		\vspace{-0.5em}
	\end{figure}
	\begin{figure}[t]
		\vspace{-0.5em}
		\centering
		\includegraphics[width=0.75\linewidth]{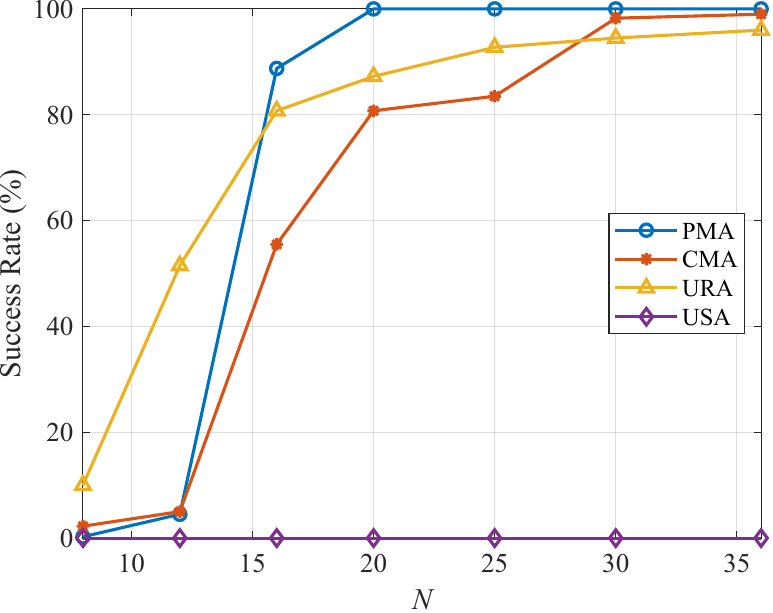}
		\caption{Success rate versus the number of MAs.}
		\label{Rate_N}
		\vspace{-0.5em}
	\end{figure}
	
	\begin{figure}[t]
		\vspace{-0.5em}
		\centering
		\includegraphics[width=0.75\linewidth]{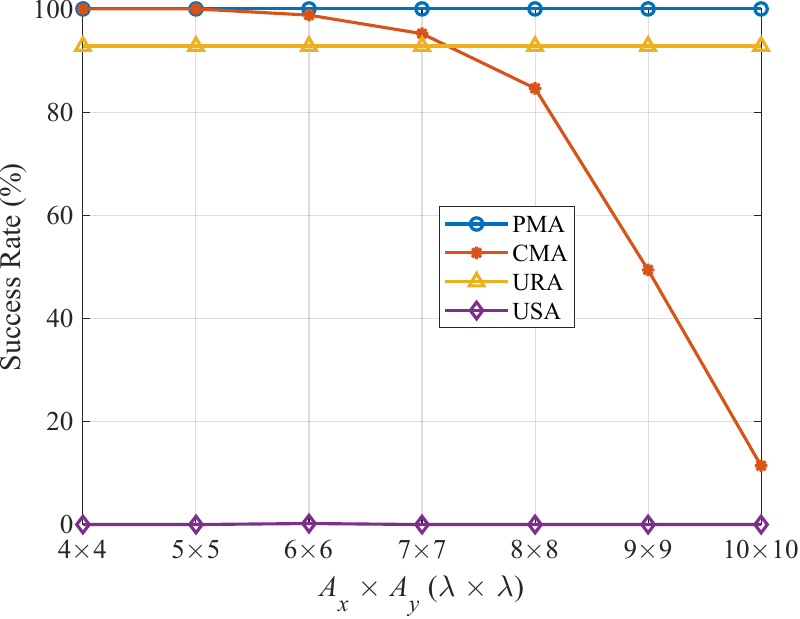}
		\caption{Success rate versus the size of the movable region.}
		\label{Rate_L}
		\vspace{-0.5em}
	\end{figure}
	\begin{figure}[t]
		\vspace{-0.5em}
		\centering
		\includegraphics[width=0.75\linewidth]{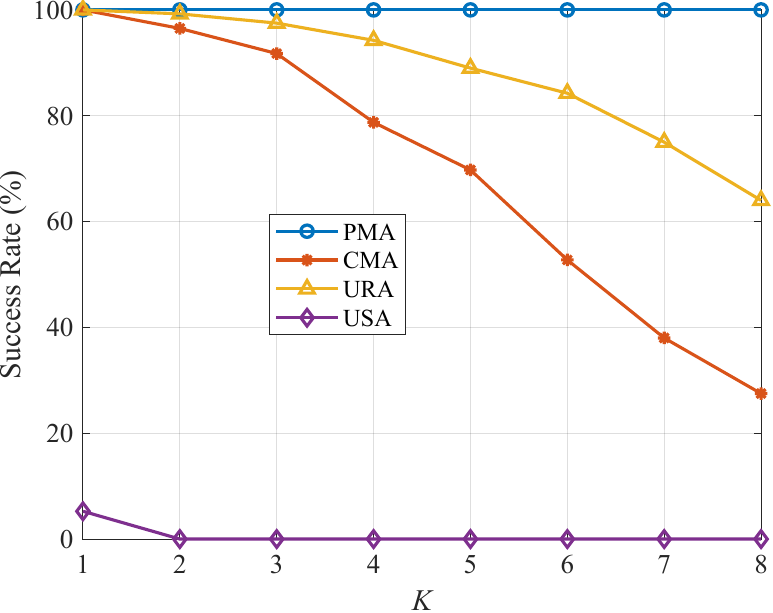}
		\caption{Success rate versus the number of the targets.}
		\label{Rate_K}
		\vspace{-0.5em}
	\end{figure}
	In this subsection, we evaluate the success rate of different arrays with respect to the varying SNR, $N$, $A_x \times A_y$, and $K$ with random targets in each trial.
	
	To quantitatively evaluate the reliability and robustness of the DOA estimation across multiple Monte Carlo trials, we introduce a success indicator function for the $q$-th trial. Specifically, an estimation is considered successful if the average Euclidean distance between the estimated and true angles across all $K$ targets falls below a predefined tolerance threshold. The success indicator function $\mathcal{I}_{q}$ is thus defined as
	\begin{align}
		\!\!\mathcal{I}_{q} =
		\begin{cases}
			1, &\!\!\!\!\!\! \sqrt{ \frac{1}{K} \sum_{k=1}^{K} \big[ (\theta_k - \hat{\theta}_{q, k})^2 + (\phi_k - \hat{\phi}_{q, k})^2 \big] } \le \beta, \\
			0, &\!\!\!\!\! \text{otherwise},
		\end{cases}
	\end{align}
	where $\beta$ denotes the success threshold, which is empirically set to $1^{\circ}$. Based on this indicator, the overall estimation success rate over $Q$ independent Monte Carlo trials is calculated by
	\begin{align}
		\text{Success Rate} = \Big( \frac{1}{Q} \sum_{q=1}^{Q} \mathcal{I}_{q}  \Big) \times 100\%.
	\end{align}
	
	\Cref{Rate_SNR_2D} illustrates the estimation success rate of different arrays versus the SNR. The proposed PMA demonstrates remarkable robustness, rapidly achieving a 100\% success rate at an exceptionally low SNR of -5 dB. This superior performance validates that minimizing the sidelobe levels effectively prevents noise-induced false peaks, ensuring reliable target extraction in low-SNR regimes. The CMA requires a higher SNR of 5 dB to achieve a comparable success rate, as its massive spatial holes cause high sidelobes that are easily corrupted by noise. The URA, constrained by its limited physical aperture, demands an even higher SNR of 20 dB to reliably resolve the targets. As expected, the USA completely fails, maintaining a 0\% success rate across all SNR conditions due to severe grating lobes.
	
	\Cref{Rate_N} depicts the success rate of various configurations as a function of the number of antennas $N$. As $N$ increases, the success rates of the PMA, CMA, and URA generally improve. The PMA rapidly converges to a 100\% success rate with only $N=20$ antennas. This highlights that an optimally distributed movable array can resolve targets efficiently with a significantly reduced hardware footprint. In contrast, the CMA requires at least $N=30$ antennas to reach a near 100\% success rate, since a larger number of elements is needed to partially fill its spatial holes and reduce interference. The URA experiences a slower ascent, achieving roughly 95\% at $N=35$ due to its strictly confined aperture. Meanwhile, the USA remains completely incapable of estimating the DOAs, exhibiting a 0\% success rate regardless of $N$.
	
	\Cref{Rate_L} shows the success rate evaluated against the size of the movable region $A_x \times A_y$. The PMA consistently maintains a perfect 100\% success rate across all evaluated region sizes, proving the high adaptability of the proposed algorithm in strategically suppressing sidelobes regardless of the deployment area. Since the URA's geometry is fixed and independent of the movable region, its success rate remains constant at approximately 95\%. Interestingly, the CMA achieves a high success rate only in smaller movable regions (e.g., $4\times4$). As the region expands to $10\times10$, the success rate of the CMA plummets to roughly 15\%. This catastrophic degradation occurs because an enlarged region exacerbates the boundary-clustering effect, expanding the central spatial hole and intensifying mutual interference. The USA remains completely ineffective throughout. 
	
	\Cref{Rate_K} illustrates the estimation success rate with respect to the number of targets $K$. The proposed PMA exhibits exceptional robustness, sustaining a 100\% success rate even when the number of co-existing targets increases to $K=8$. This directly verifies that minimizing the sidelobe levels effectively mitigates multi-target mutual masking. The CMA's performance drops sharply as $K$ increases, falling to approximately 30\% at $K=8$. This rapid decline confirms that the high sidelobes generated by the CMA's spatial holes are highly susceptible to severe inter-target interference in dense target scenarios. The URA configuration also degrades significantly, dropping to a success rate of about 55\% at $K=8$, primarily because its restricted physical aperture lacks the spatial resolution required to separate a large number of closely spaced targets. Finally, the USA configuration fails entirely to resolve multiple targets, dropping to a 0\% success rate.
	\vspace{-0.5em}
	\section{Conclusion}
	In this paper, we proposed a multi-target wireless sensing scheme based on an MA array to achieve high-precision and robust target parameter estimation. We introduced two complementary metrics, namely the LB of the CRB and the SVC, to characterize the estimation accuracy and sidelobe interference, respectively. A closed-form LB was derived by exploiting the partial order relationship of projection matrices, which decouples the unknown target directions from the MA positions. Based on these two metrics, we formulated an MA position optimization problem to minimize the SVC while satisfying a prescribed LB constraint. The continuous SVC objective and the LB constraint were further transformed into tractable discrete and spatial variance constraints, respectively. Subsequently, an SCA-based AO algorithm was developed to iteratively optimize the MA positions, with its monotonic convergence guaranteed by a position-independent Lipschitz constant. Numerical results demonstrated that the proposed MA configuration effectively suppresses SVC sidelobes while achieving superior estimation accuracy and sensing robustness over conventional fixed-position arrays.
	
	\appendices
	\section{CRB Derivation}\label{AppendixA}
	In this appendix, we aim to derive the CRB of the multi-target sensing systems with MAs.
	To facilitate the derivation, we vectorize both sides of \eqref{YY} as
	\begin{align}\label{yvec}
		\bm{\tilde{y}}=  (\bm{I}_{T} \otimes \bm{A})\bm{\tilde{s}}
		+\bm{\tilde{n}}= \bm{\mu}+\bm{\tilde{n}},
	\end{align}
	where the equality utilizes the property of the Kronecker product $\operatorname{vec}(\bm{ABC}) = (\bm{C}^T \otimes \bm{A})\operatorname{vec}(\bm{B})$ with $ \bm{\tilde{s}} = {\operatorname{vec}(\bm{S})} $ and ${\operatorname{vec}(\bm{N})}=\bm{\tilde{n}}$. Equation \eqref{yvec} can be regarded as a Gaussian  distribution with mean $\bm{\mu}$ and variance $\sigma_n^2$. The unknown parameters to be estimated are the directions of the $K$ targets and the signal $\bm{S}$. Thus, we define the unknown parameter vector as
	\begin{align}
		\bm{\psi}=[\bm{\omega}^T, \bm{\tilde{s}}^T_{R}, \bm{\tilde{s}}^T_{I}  ]^T\in \mathbb{R}^{(2K+2KT)  \times 1},
	\end{align}
	where $\bm{\tilde{s}}_{R} = \mathfrak{R}\{\bm{\tilde{s}}^T\}$, and $\bm{\tilde{s}}_{I} = \mathfrak{I}\{\bm{\tilde{s}}^T\}$. Therefore, the FIM matrix $\bm{F}\in \mathbb{R}^{(2K+2KT) \times (2K+2KT) }$ can be partitioned into the following block matrix
	\begin{align}
			\bm{F} &= \begin{bmatrix}
				\bm{F}_{\bm{\omega}\bm{\omega}} & \bm{F}_{\bm{\omega}\bm{\tilde{s}}_R} & \bm{F}_{\bm{\omega}\bm{\tilde{s}}_I} \\
				\bm{F}^T_{\bm{\omega}\bm{\tilde{s}}_R} & \bm{F}_{\bm{\tilde{s}}_R\bm{\tilde{s}}_R} & \bm{F}_{\bm{\tilde{s}}_R\bm{\tilde{s}}_I} \\
				\bm{F}^T_{\bm{\omega}\bm{\tilde{s}}_I} & \bm{F}^T_{\bm{\tilde{s}}_R\bm{\tilde{s}}_I} & \bm{F}_{\bm{\tilde{s}}_I\bm{\tilde{s}}_I}
			\end{bmatrix}=
			\begin{bmatrix}
				\bm{F}_{\bm{\omega}\bm{\omega}} & \bm{F}_{\bm{\omega}\bm{\tilde{s}}} \\
				\bm{F}^T_{\bm{\omega}\bm{\tilde{s}}} & \bm{F}_{\bm{\tilde{s}}\bm{\tilde{s}}} 
			\end{bmatrix},
	\end{align}
	where the second equality holds true by defining
	\begin{align}
		\bm{F}_{\bm{\omega}\bm{\tilde{s}}} &\triangleq \big[ \bm{F}_{\bm{\omega}\bm{\tilde{s}}_R}, \, \bm{F}_{\bm{\omega}\bm{\tilde{s}}_I} \big], \quad
		\bm{F}_{\bm{\tilde{s}}\bm{\tilde{s}}} \triangleq \begin{bmatrix}
			\bm{F}_{\bm{\tilde{s}}_R\bm{\tilde{s}}_R} & \bm{F}_{\bm{\tilde{s}}_R\bm{\tilde{s}}_I} \\
			\bm{F}^T_{\bm{\tilde{s}}_R\bm{\tilde{s}}_I} & \bm{F}_{\bm{\tilde{s}}_I\bm{\tilde{s}}_I}
		\end{bmatrix}.
	\end{align}
	
	Furthermore, the $(m,n)$-th element of the FIM matrix can be expressed as
	\begin{align}
		[\bm{F}]_{m,n}=	\frac{2}{\sigma_n^2}
		\mathfrak{R}
		\Big\{\Big(\frac{\partial \bm{\mu}}
		{\partial [\bm{\psi}]_m}\Big)^H
		\Big(\frac{\partial \bm{\mu}}
		{\partial [\bm{\psi}]_n}\Big)\Big\}.
	\end{align}
	The first-order derivative of the mean vector $\bm{\mu}$ with respect to the parameter $\bm{\omega}$ can be formulated as
	\begin{align}\label{uw}
		\frac{\partial\bm{\mu}}{\partial\bm{\omega}} 
		=\Big[ \Big(\frac{\partial(\bm{A}\bm{s}(1))}{\partial\bm{\omega}}\Big)^T, \ldots, \Big(\frac{\partial(\bm{A}\bm{s}(T))}{\partial\bm{\omega}}\Big)^T \Big]^T,
	\end{align}
	where
	\begin{align}\label{uwf}
		&\frac{\partial(\bm{A}\bm{s}(t))}{\partial\bm{\omega}}
		=\Big[\frac{\partial(\bm{A}\bm{s}(t))}{\partial\bm{u}},\frac{\partial(\bm{A}\bm{s}(t))}{\partial\bm{v}}\Big] \\
		&=\Big[\frac{\partial\bm{a}_1}{\partial u_1}s_1(t),\ldots,\frac{\partial\bm{a}_K}{\partial u_K}s_K(t),\frac{\partial\bm{a}_1}{\partial v_1}s_1(t),\ldots,\frac{\partial\bm{a}_K}{\partial v_K}s_K(t)\Big]\nonumber \\
		&=\bm{W}[\bm{I}_2\otimes  \diag(\bm{s}(t)) ]=\bm{W}\bm{\tilde{S}}(t)\nonumber,
	\end{align}
	and $
	\bm W=\left[\frac{\partial\bm a_1}{\partial u_1},
	\ldots,\frac{\partial\bm a_K}{\partial u_K},
	\frac{\partial\bm a_1}{\partial v_1},\ldots,
	\frac{\partial\bm a_K}{\partial v_K}\right] \in\mathbb C^{N\times 2K}$.
	Combining \eqref{uw} and \eqref{uwf} yields
	\begin{align}
		\frac{\partial\bm{\mu}}{\partial \bm{\omega}}
		=
		\begin{bmatrix}
			\bm{\tilde{S}}^T(1)\bm{W}^T,
			\ldots,
			\bm{\tilde{S}}^T(T)\bm{W}^T
		\end{bmatrix}^T.
	\end{align}
	Utilizing the mean expansion $\bm{\mu} = (\bm{I}_T \otimes \bm{A})\bm{s}$, the derivatives with respect to the real and imaginary parts are respectively given by
	\begin{align}
		\frac{\partial\bm{\mu}}{\partial\bm{\tilde{s}}_R} = \bm{I}_T \otimes \bm{A}, \enspace \frac{\partial\bm{\mu}}{\partial\bm{\tilde{s}}_I} = j(\bm{I}_T \otimes \bm{A}).
	\end{align}
	Hence, we can obtain \eqref{Fww}-\eqref{Fws},
\begin{figure*}
	\vspace{-1.5em}
	\begin{align}
		\bm{F}_{\bm{\omega\omega}} &= \frac{2}{\sigma_n^2}\mathfrak{R}\Big\{\Big(\frac{\partial \bm{\mu}}{\partial \bm{\omega}}\Big)^H \Big(\frac{\partial \bm{\mu}}{\partial \bm{\omega}}\Big)\Big\} 
		= \frac{2}{\sigma_n^2}\mathfrak{R}\Bigg\{ \big[ \bm{\tilde{S}}^H(1)\bm{W}^H, \ldots, \bm{\tilde{S}}^H(T)\bm{W}^H\big] \begin{bmatrix} \bm{W}\bm{\tilde{S}}(1) \\ \vdots \\ \bm{W}\bm{\tilde{S}}(T) \end{bmatrix} \Bigg\} = \frac{2}{\sigma_n^2}\mathfrak{R}\Big\{ \sum_{t=1}^{T} \bm{\tilde{S}}^H(t)\bm{W}^H\bm{W}\bm{\tilde{S}}(t) \Big\},\label{Fww}\\
		\bm{F}_{\bm{\tilde{s}}_R\bm{\tilde{s}}_R} &= \bm{F}_{\bm{\tilde{s}}_I\bm{\tilde{s}}_I} =  \frac{2}{\sigma_n^2}\mathfrak{R}\Big\{\Big(\frac{\partial \bm{\mu}}{\partial \bm{\tilde{s}}_R}\Big)^H \Big(\frac{\partial \bm{\mu}}{\partial \bm{\tilde{s}}_R}\Big)\Big\} = \frac{2}{\sigma_n^2}\mathfrak{R}\Big\{ (\bm{I}_T \otimes \bm{A})^H (\bm{I}_T \otimes \bm{A}) \Big\} = \frac{2}{\sigma_n^2}\mathfrak{R}\Big\{ \bm{I}_T \otimes (\bm{A}^H\bm{A}) \Big\}, \label{FSS}\\
		\bm{F}_{\bm{\tilde{s}}_R\bm{\tilde{s}}_I} &= -\bm{F}_{\bm{\tilde{s}}_I\bm{\tilde{s}}_R} = \frac{2}{\sigma_n^2}\mathfrak{R}\Big\{\Big(\frac{\partial \bm{\mu}}{\partial \bm{\tilde{s}}_R}\Big)^H \Big(\frac{\partial \bm{\mu}}{\partial \bm{\tilde{s}}_I}\Big)\Big\} = \frac{2}{\sigma_n^2}\mathfrak{R}\Big\{ j(\bm{I}_T \otimes \bm{A})^H (\bm{I}_T \otimes \bm{A}) \Big\} = -\frac{2}{\sigma_n^2}\mathfrak{I}\Big\{ \bm{I}_T \otimes (\bm{A}^H\bm{A}) \Big\}, \label{FSRSI}\\
		\bm{F}_{\bm{\omega}\bm{\tilde{s}}_R} & = 
		\frac{2}{\sigma_n^2} \mathfrak{R} \Big\{ \big[ \bm{\tilde{S}}^H(1)\bm{W}^H, \ldots, \bm{\tilde{S}}^H(T)\bm{W}^H \big] (\bm{I}_T \otimes \bm{A}) \Big\},\quad  \bm{F}_{\bm{\omega}\bm{\tilde{s}}_I} =-	\frac{2}{\sigma_n^2} \mathfrak{I} \Big\{ \big[ \bm{\tilde{S}}^H(1)\bm{W}^H, \ldots, \bm{\tilde{S}}^H(T)\bm{W}^H \big] (\bm{I}_T \otimes \bm{A}) \Big\} \label{Fws},
	\end{align}
	\vspace{-0.5em}
	\hrule
	\vspace{-0.5em}
\end{figure*}
	where the fourth equality in \eqref{FSS} holds true by the identity $(\bm{A}\otimes\bm{B})(\bm{C}\otimes\bm{D}) = (\bm{A}\bm{C})\otimes(\bm{B}\bm{D})$.
	According to the matrix inversion lemma, the inverse of the CRB block matrix with respect to $\bm{\omega}$ can be formulated as \eqref{CRB}.
	\begin{figure*}
		\vspace{-0.5em}
		\begin{align}\label{CRB}
			\bm{CRB}^{-1}(\bm{\omega}) 
			&= \bm{F}_{\bm{\omega}\bm{\omega}} - \bm{F}_{\bm{\omega}\bm{\tilde{s}}} \bm{F}_{\bm{\tilde{s}}\bm{\tilde{s}}}^{-1} \bm{F}_{\bm{\omega}\bm{\tilde{s}}}^T = \frac{2}{\sigma_n^2} \mathfrak{R} \Big\{ \sum_{t=1}^{T} \bm{\tilde{S}}^H(t)\bm{W}^H\bm{W}\bm{\tilde{S}}(t) \Big\} - \frac{2}{\sigma_n^2} \mathfrak{R} \Big\{ \sum_{t=1}^{T} \bm{\tilde{S}}^H(t)\bm{W}^H\bm{A}(\bm{A}^H\bm{A})^{-1}\bm{A}^H\bm{W}\bm{\tilde{S}}(t) \Big\} \nonumber\\
			&= \frac{2}{\sigma_n^2} \mathfrak{R} \Big\{ \sum_{t=1}^{T} \bm{\tilde{S}}^H(t)\bm{W}^H \Big( \bm{I}_M - \bm{A}(\bm{A}^H\bm{A})^{-1}\bm{A}^H \Big) \bm{W}\bm{\tilde{S}}(t) \Big\} = \frac{2}{\sigma_n^2} \mathfrak{R} \Big\{ \sum_{t=1}^{T} \bm{\tilde{S}}^H(t)\bm{W}^H \bm{P}^{\perp}_{\bm{A}} \bm{W}\bm{\tilde{S}}(t) \Big\} ,
		\end{align}
		\vspace{-0.5em}
		\hrule
		\vspace{-0.5em}
	\end{figure*}
	where $\bm{P}_{\bm{A}}^{\perp} = \bm{I}_N - \bm{A}(\bm{A}^H\bm{A})^{-1}\bm{A}^H$ denotes the orthogonal projection matrix. The second equality in the above derivation leverages the inversion property of the block matrix composed of the real and imaginary parts of a complex matrix, which is given by
	\begin{align}
		\begin{bmatrix} \mathfrak{R}\{\bm{Z}\} & -\mathfrak{I}\{\bm{Z}\} \\ \mathfrak{I}\{\bm{Z}\} & \mathfrak{R}\{\bm{Z}\} \end{bmatrix}^{-1} =  \begin{bmatrix} \mathfrak{R}\{\bm{Z}^{-1}\} & -\mathfrak{I}\{\bm{Z}^{-1}\} \\ \mathfrak{I}\{\bm{Z}^{-1}\} & \mathfrak{R}\{\bm{Z}^{-1}\} \end{bmatrix} .
	\end{align}
	Subsequently, the actual CRB matrix for the spatial angles is obtained by taking the inverse of $\bm{CRB}^{-1}(\bm{\omega})$. To comprehensively evaluate the parameter estimation performance for the $K$ targets, we consider the sum of the estimation variances for both elevation and azimuth angles, which is mathematically equivalent to taking the trace of the final CRB matrix. Therefore, the overall scalar CRB objective function, formulated with respect to the antenna positions $\bm{p}$ and the target angles $\bm{\omega}$, can be explicitly expressed as
	\begin{align} 
		\!\!\!\!{\mathrm{CRB}(\bm{p}, \bm{\omega})} \!=\!\frac{\sigma_n^2}{2}\tr \Big\{\Big[\mathfrak{R}\Big\{\sum_{t=1}^{T}{\bm{\tilde{S}}^H(t)\bm{W}^H\bm{P}_{\bm{A}}^{\perp}}\bm{W}\bm{\tilde{S}}(t)\Big\}\Big]^{-1}\!\Big\}.
	\end{align}
	\vspace{-0.5em}
	\section{Proof of Theorem \ref{theorem1}}\label{AppendixB}
	\vspace{-0.5em}
	In this section, we provide the proof of Theorem \ref{theorem1}.
	For convenience, we define a compact matrix $\bm{G}$ to simplify the structural representation of the CRB in \eqref{CRBpw} as
	\begin{align}
		\bm{G}=\frac{2}{\sigma_n^2}\sum_{t=1}^{T}\mathfrak{R}\big\{\bm{\tilde{S}}^H(t)\bm{W}^H\bm{P}_{\bm{A}}^{\perp}\bm{W}\bm{\tilde{S}}(t)\big\}\in \mathbb{C}^{2K \times 2K}.
	\end{align}
	Then we partition $\bm{G}$ into $K$ submatrices, and the $(k,k)$-th diagonal subblock $\bm{G}_{k, k} \in \mathbb{C}^{2 \times 2}$ can be expressed as
	\begin{align}\label{Gkk}
		\bm{G}_{k,k} &= \frac{2}{\sigma_n^2}\sum_{t=1}^{T}\mathfrak{R}\big\{ \big(s_k(t)\bm{I}_2\big)^H \bm{W}_k^H\bm{P}_{\bm{A}}^{\perp}\bm{W}_k \big(s_k(t)\bm{I}_2\big) \big\} \nonumber
		\\
		&=\frac{2T \sigma_k^2}{\sigma_n^2} \mathfrak{R}\big\{\bm{W}_k^H\bm{P}_{\bm{A}}^{\perp}\bm{W}_k\big\},
	\end{align}
	where $\bm{W}_k = \big[\dot{\bm{a}}_{u,k}, \dot{\bm{a}}_{v,k}\big] \in \mathbb{C}^{N \times 2}$, and
	\begin{align}
		\dot{\bm{a}}_{u,k} &= \mathrm{j}\kappa \bm{D}_x \bm{a}(\bm{p}, \bm{r}_k),\quad
		\dot{\bm{a}}_{v,k} = \mathrm{j}\kappa \bm{D}_y \bm{a}(\bm{p}, \bm{r}_k),
	\end{align}
	where $\bm{D}_x = \diag(x_1, x_2, \ldots, x_N)$ and $\bm{D}_y = \diag(y_1, y_2, \ldots, y_N)$ denote the diagonal matrices of the $x$-axis and $y$-axis MA position coordinates respectively.
	
	We define the single-target projection matrix as $\bm{P}_{\bm{a}_k}^{\perp}=\bm{I}_N-\frac{1}{N}\bm{a}(\bm{p}, \bm{r}_k)\bm{a} (\bm{p}, \bm{r}_k)^H$. Because the multi-target null space is a subset of the single-target null space, a positive semi-definite partial order relation exists between the projection operators
	\begin{align}\label{preceq}
		\bm{P}_{\bm{A}}^{\perp}\preceq\bm{P}_{\bm{a}_k}^{\perp}.
	\end{align}
	Substituting this partial order relation into \eqref{Gkk} yields
	\begin{align}
		\bm{G}_{k,k}^{-1}\succeq\bm{\bar{G}}_{k,k}^{-1}
		&=\frac{\sigma_n^2}{2T\sigma_k^2}\Big[\mathfrak{R}\big\{\bm{D}_k^H
		\bm{P}_{\bm{a}_k}^{\perp}\bm{D}_k\big\}\Big]^{-1}\\
		&=\frac{\sigma_n^2}{2T\sigma_k^2}
		\begin{bmatrix} 
			\dot{\bm{a}}_{u,k}^H
			\bm{P}_{\bm{a}_k}^{\perp}
			\dot{\bm{a}}_{u,k}
			&
			\dot{\bm{a}}_{u,k}^H
			\bm{P}_{\bm{a}_k}^{\perp}
			\dot{\bm{a}}_{v,k}
			\\[1mm]
			\dot{\bm{a}}_{v,k}^H
			\bm{P}_{\bm{a}_k}^{\perp}
			\dot{\bm{a}}_{u,k}
			&
			\dot{\bm{a}}_{v,k}^H
			\bm{P}_{\bm{a}_k}^{\perp}
			\dot{\bm{a}}_{v,k}
		\end{bmatrix}^{-1}, \nonumber
	\end{align}
	where the individual elements of the matrix can be expressed as \eqref{AA}, \eqref{BB}, \eqref{BA}, and \eqref{AB}, respectively.
	\begin{figure*}
		\vspace{-1.5em}
		\begin{align}\label{AA}
			\dot{\bm{a}}_{u,k}^H\bm{P}_{\bm{a}_k}^{\perp}\dot{\bm{a}}_{u,k}&=\dot{\bm{a}}_{u,k}^H\Big(\bm{I}_N-\frac{1}{N}\bm{a}_k\bm{a}_k^H\Big)
			\dot{\bm{a}}_{u,k}=\dot{\bm{a}}_{u,k}^H
			\dot{\bm{a}}_{u,k}-\frac{1}{N}\Big(
			\dot{\bm{a}}_{u,k}^H\bm{a}_k\Big)\Big(\bm{a}_k^H\dot{\bm{a}}_{u,k}\Big)=
			\frac{4\pi^2}{\lambda^2}
			\Big[N\overline{x^2}-N\bar{x}^2\Big]=\frac{4\pi^2 N}{\lambda^2}\var(\bm{x}),\\
			\dot{\bm{a}}_{v,k}^H\bm{P}_{\bm{a}_k}^{\perp}\dot{\bm{a}}_{v,k}&=\dot{\bm{a}}_{v,k}^H\Big(\bm{I}_N-\frac{1}{N}\bm{a}_k\bm{a}_k^H\Big)
			\dot{\bm{a}}_{v,k}= \frac{4\pi^2}{\lambda^2}\left[N\overline{y^{2}}-N\bar{y}^2\right]=\frac{4\pi^2 N}{\lambda^2}\var(\bm{y}), \label{BB}\\
			\dot{\bm{a}}_{u,k}^H\bm{P}_{\bm{a}_k}^{\perp}\dot{\bm{a}}_{v,k}&=
			\dot{\bm{a}}_{u,k}^H\dot{\bm{a}}_{v,k}-\frac{1}{N}\big(\dot{\bm{a}}_{u,k}^H\bm{a}_k\big)\big(\bm{a}_k^H\dot{\bm{a}}_{v,k}\big)=\frac{4\pi^2}{\lambda^2}\Big[ \sum\nolimits_{n=1}^{N}x_n y_n-N\bar{x}\bar{y}\Big]=\frac{4\pi^2 N}{\lambda^2}\cov(\bm{x},\bm{y}),\label{BA}\\
			\dot{\bm{a}}_{v,k}^H\bm{P}_{\bm{a}_k}^{\perp}\dot{\bm{a}}_{u,k}&=
			\dot{\bm{a}}_{v,k}^H\dot{\bm{a}}_{u,k}-\frac{1}{N}\big(\dot{\bm{a}}_{v,k}^H\bm{a}_k\big)\big(\bm{a}_k^H\dot{\bm{a}}_{u,k}\big)=\frac{4\pi^2}{\lambda^2}\Big[ \sum\nolimits_{n=1}^{N}x_n y_n-N\bar{x}\bar{y}\Big]=\frac{4\pi^2 N}{\lambda^2}\cov(\bm{x},\bm{y}),\label{AB}
		\end{align}
		\vspace{-0.5em}
		\hrule
		\vspace{-0.5em}
	\end{figure*}
	where $\bar{x} = \frac{1}{N}\sum_{n=1}^{N}x_n$ and $\overline{x^2} = \frac{1}{N}\sum_{n=1}^{N}x_n^2$. Therefore, we can obtain the LB of the CRB as
	\begin{align}
		{\mathrm{CRB}(\bm{p}, \bm{\omega})}&= \tr\{ \bm{G}^{-1}\}\ge \sum_{k=1}^{K}
		\tr\{\bm{G}_{k,k}^{-1}\}\ge  \sum_{k=1}^{K}	\tr\{\bm{\bar{G}}_{k,k}^{-1}\} \nonumber
		\\
		&=\sum_{k=1}^{K} \varpi_k
		\Big[\frac{1}{\var(\bm{x})-
			\frac{\cov^2(\bm{x},\bm{y})}
			{\var(\bm{y})}}+\frac{1}
		{\var(\bm{y})-\frac{\cov^2(\bm{x},\bm{y})}{\var(\bm{x})}}\Big]\nonumber\\
		&\ge \sum_{k=1}^{K} \frac{4\varpi_k}
		{\var(\bm{y})-\frac{\cov^2(\bm{x},\bm{y})}{\var(\bm{x})}+\var(\bm{x})-\frac{\cov^2(\bm{x},\bm{y})}{\var(\bm{y})}}\nonumber\\
		& \ge \frac{1}{\var(\bm{y}) +\var(\bm{x})} \sum_{k=1}^{K} {4\varpi_k} \triangleq \mathrm{LB}(\bm{p},\bm{\omega}),
	\end{align}
	where $\varpi_k=\frac{\sigma_n^2\lambda^2}{8\pi^2 TN \sigma_k^2}$. The first inequality utilizes the inverse property of positive definite matrices, the second inequality is based on the partial order relation in \eqref{preceq}, the third inequality applies the basic arithmetic inequality $\frac{1}{a}+\frac{1}{b} \ge \frac{4}{a+b},(a,b >0)$, and the fourth inequality holds because $\var(\bm{x}), \var(\bm{y})\ge 0$.
	\vspace{-0.5em}
	
	
	
	%
	\raggedbottom 
	\bibliographystyle{IEEEtran}
	\bibliography{mybib_Abbreviation_1}

@ARTICLE{ModelingMA,
	author={Zhu, Lipeng and Ma, Wenyan and Zhang, Rui},
	journal={IEEE Trans. Wireless Commun.}, 
	title={Modeling and Performance Analysis for Movable Antenna Enabled Wireless Communications}, 
	year={2024},
	volume={23},
	number={6},
	pages={6234-6250},
	doi={10.1109/TWC.2023.3330887}}

@misc{mao2025movableantennaarrayenhancedmultitarget,
	title={Movable-Antenna Array Enhanced Multi-Target Sensing: {CRB} Characterization and Optimization}, 
	author={Haobin Mao and Lipeng Zhu and Wenyan Ma and Zhenyu Xiao and Xiang-Gen Xia and Rui Zhang},
	year={2025},
	eprint={2511.18907},
	archivePrefix={arXiv},
	primaryClass={eess.SP},
	url={https://arxiv.org/abs/2511.18907}, 
}

@misc{chen2026exploitingmovableantennasmultitarget,
	title={Exploiting Movable Antennas for Multi-Target Wireless Sensing System in Multipath Environments}, 
	author={Xiang Chen and Ming-Min Zhao and Lebin Chen and Jie Xu and Min-Jian Zhao},
	year={2026},
	eprint={2607.23249},
	archivePrefix={arXiv},
	primaryClass={cs.IT},
	url={https://arxiv.org/abs/2607.23249}, 
}

@ARTICLE{mosek2019,
	author = {MOSEK ApS},
	title = {The {MOSEK} optimization toolbox for {MATLAB}},
	year = {2019},
	journal = {User’s Guide and
	Reference Manual},
	month={Sep.},
}

@misc{ChengzhiTWCMIMOOFDM,
      title={Tensor Decomposition-Based Wireless Sensing for {MIMO-OFDM} {ISAC} via Flexible Spatial-Temporal-Spectral Optimization}, 
author={Chengzhi Ye and others},
year={2026},
eprint={2608.15792},
archivePrefix={arXiv},
primaryClass={eess.SP},
url={https://arxiv.org/abs/2608.15792}, 
}

@ARTICLE{10944486,
	author={Jiang, Chengjun and Zhang, Chensi and Huang, Chongwen and Ge, Jianhua and Niyato, Dusit and Yuen, Chau},
	journal={IEEE Trans. Wireless Commun.}, 
	title={Movable Antenna-Assisted Integrated Sensing and Communication Systems}, 
	year={2025},
	volume={24},
	number={8},
	pages={6397-6412},
	doi={10.1109/TWC.2025.3552843},
	month={Aug.},}

@INPROCEEDINGS{YuchenMASensing,
	author={Li, Yuchen and Sun, Chen},
	booktitle={2026 IEEE Wireless Communications and Networking Conference (WCNC)}, 
	title={Movable Antenna Advanced Wireless Sensing with Scannable Angle Range}, 
	year={2026},
	volume={},
	number={},
	pages={1-6},
	doi={10.1109/WCNC65185.2026.11555046},
	month={Apr.},}

@ARTICLE{LiJinTRMAsensing,
	author={Li, Jin and Zhu, Lipeng and Liu, Nan and Zhang, Rui},
	journal={IEEE Wireless Commun. Lett.}, 
	title={Joint Transmit and Receive Antenna Position Optimization for Movable Antenna Array Enhanced Wireless Sensing}, 
	year={2025},
	volume={14},
	number={12},
	pages={3867-3871},
	doi={10.1109/LWC.2025.3603217},
	month={Dec.},}

@ARTICLE{6DSensing,
	author={Shao, Xiaodan and Zhang, Rui and Schober, Robert},
	journal={IEEE Wireless Commun. Lett.}, 
	title={Exploiting Six-Dimensional Movable Antenna for Wireless Sensing}, 
	year={2025},
	volume={14},
	number={2},
	pages={265-269},
	doi={10.1109/LWC.2024.3487966},
	month={Feb.},}

@ARTICLE{ZhuMAsurvey,
	author={Zhu, Lipeng and Ma, Wenyan and Mei, Weidong and Zeng, Yong and Wu, Qingqing and Ning, Boyu and Xiao, Zhenyu and Shao, Xiaodan and Zhang, Jun and Zhang, Rui},
	journal={IEEE Commun. Surveys Tuts.}, 
	title={A Tutorial on Movable Antennas for Wireless Networks}, 
	year={2026},
	volume={28},
	number={},
	pages={3002-3054},
	doi={10.1109/COMST.2025.3546373},
	month={Feb.},}

@ARTICLE{MaISACposition,
	author={Ma, Wenyan and Zhu, Lipeng and Zhang, Rui},
	journal={IEEE Trans. Signal Process.}, 
	title={Movable Antenna Enhanced Integrated Sensing and Communication Via Antenna Position Optimization}, 
	year={2026},
	volume={74},
	number={},
	pages={1522-1537},
	doi={10.1109/TSP.2026.3674463},
	month={Jan.},}

@ARTICLE{VaidyanathanCoPrime,
	author={Vaidyanathan, Palghat P. and Pal, Piya},
	journal={IEEE Trans. Signal Process.}, 
	title={Sparse Sensing With Co-Prime Samplers and Arrays}, 
	year={2011},
	volume={59},
	number={2},
	pages={573-586},
	doi={10.1109/TSP.2010.2089682},
	month={Feb.},}

@ARTICLE{NestedMIMO,
	author={Yang, Minglei and Sun, Lei and Yuan, Xin and Chen, Baixiao},
	journal={IEEE Signal Process. Lett.}, 
	title={A New Nested {MIMO} Array With Increased Degrees of Freedom and Hole-Free Difference Coarray}, 
	year={2018},
	volume={25},
	number={1},
	pages={40-44},
	doi={10.1109/LSP.2017.2766294},
	month={Jan.},}

@ARTICLE{DaiQLMIMOOFDM,
	author={Dai, Qianglong and others},
	journal={IEEE Commun. Surveys Tuts.}, 
	title={A Tutorial on {MIMO-OFDM} {ISAC}: From Far-Field to Near-Field}, 
	year={2026},
	volume={28},
	number={},
	pages={4319-4358},
	doi={10.1109/COMST.2025.3650568},
	month={Jan.},
}

@ARTICLE{GaoLowComplexityMIMO,
	author={Gao, Xinyu and Dai, Linglong and Sayeed, Akbar M.},
	journal={IEEE Commun. Mag.}, 
	title={Low {RF}-Complexity Technologies to Enable Millimeter-Wave {MIMO} with Large Antenna Array for 5G Wireless Communications}, 
	year={2018},
	volume={56},
	number={4},
	pages={211-217},
	doi={10.1109/MCOM.2018.1600727},
	month={Apr.},}

@ARTICLE{HaiquanXLMIMO,
	author={Lu, Haiquan and Zeng, Yong and You, Changsheng and Han, Yu and Zhang, Jiayi and Wang, Zhe and Dong, Zhenjun and Jin, Shi and Wang, Cheng-Xiang and Jiang, Tao and You, Xiaohu and Zhang, Rui},
	journal={IEEE Commun. Surveys Tuts.}, 
	title={A Tutorial on Near-Field {XL-MIMO} Communications Toward {6G}}, 
	year={2024},
	volume={26},
	number={4},
	pages={2213-2257},
	doi={10.1109/COMST.2024.3387749},
	month={Apr.},}

@ARTICLE{RuoyuDynamic,
	author={Zhang, Ruoyu and Chen, Guangyi and Cheng, Lei and Guan, Xinrong and Wu, Qingqing and Wu, Wen and Zhang, Rui},
	journal={IEEE Trans. Wireless Commun.}, 
	title={Tensor-Based Channel Estimation for Extremely Large-Scale {MIMO-OFDM} With Dynamic Metasurface Antennas}, 
	year={2025},
	volume={24},
	number={7},
	pages={6052-6068},
	doi={10.1109/TWC.2025.3551144},
	month={Jul.},}

@ARTICLE{RuoyuDOA,
	author={Zhang, Ruoyu and Shim, Byonghyo and Wu, Wen},
	journal={IEEE Trans. Signal Process.}, 
	title={Direction-of-Arrival Estimation for Large Antenna Arrays With Hybrid Analog and Digital Architectures}, 
	year={2022},
	volume={70},
	number={},
	pages={72-88},
	doi={10.1109/TSP.2021.3119768},
	month={Jan.},}

@ARTICLE{ZhangChannel2024,
	author={Zhang, Ruoyu and Cheng, Lei and Zhang, Wei and Guan, Xinrong and Cai, Yueming and Wu, Wen and Zhang, Rui},
	journal={IEEE Wireless Commun. Lett.}, 
	title={Channel Estimation for Movable-Antenna {MIMO} Systems via Tensor Decomposition}, 
	year={2024},
	volume={13},
	number={11},
	pages={3089-3093},
	doi={10.1109/LWC.2024.3450592},
	month={Nov.},}

@misc{ma2025movableantennatrajectory,
	title={Movable-Antenna Trajectory Optimization for Wireless Sensing: {CRB} Scaling Laws over Time and Space}, 
	author={Wenyan Ma and Lipeng Zhu and Rui Zhang},
	year={2025},
	eprint={2509.14905},
	archivePrefix={arXiv},
	primaryClass={cs.IT},
	url={https://arxiv.org/abs/2509.14905},
	month={Sep.}, 
}

@misc{ye2026selfcalibrationdoaestimationmovable,
 	title={Self-Calibration {DOA} Estimation for Movable Antenna Systems with Antenna Position Errors}, 
 	author={Chengzhi Ye and Ruoyu Zhang and Wen Wu and Byonghyo Shim},
 	year={2026},
 	eprint={2605.23140},
 	archivePrefix={arXiv},
 	primaryClass={eess.SP},
 	url={https://arxiv.org/abs/2605.23140},
	month={May}, 
 }

@ARTICLE{MaMAsurvey,
 	author={Ma, Wenyan and others},
 	journal={IEEE Commun. Surveys Tuts.}, 
 	title={A Survey on Reconfigurable and Movable Antennas for Wireless Communications and Sensing}, 
 	year={2026},
 	volume={28},
 	number={},
 	pages={4842-4882},
 	doi={10.1109/COMST.2026.3663021},
	month={Feb.},}

@INPROCEEDINGS{WangNearfieldMA,
	author={Wang, Yushen and Mei, Weidong and Wei, Xin and Ning, Boyu and Chen, Zhi},
	booktitle={2025 IEEE International Conference on Communications Workshops (ICC Workshops)}, 
	title={Antenna Position Optimization for Movable Antenna-Empowered Near-Field Sensing}, 
	year={2025},
	volume={},
	number={},
	pages={324-329},
	doi={10.1109/ICCWorkshops67674.2025.11162470},
	month={Jun.},}

@ARTICLE{ChengzhiIOTJ20262D,
	author={Ye, Chengzhi and Zhang, Ruoyu and Yao, Lei and Wu, Wen},
	journal={IEEE Internet Things J.}, 
	title={Joint Shape-Position Optimization Enhanced {2D} {DOA} Estimation in Movable Antenna Systems}, 
	year={2026},
	volume={13},
	number={14},
	pages={32447-32450},
	doi={10.1109/JIOT.2026.3689968},
	month={Jul.},}

@ARTICLE{PiyaVaidyanathan2010,
	author={Pal, Piya and Vaidyanathan, P. P.},
	journal={IEEE Trans. Signal Process.}, 
	title={Nested Arrays: A Novel Approach to Array Processing With Enhanced Degrees of Freedom}, 
	year={2010},
	volume={58},
	number={8},
	pages={4167-4181},
	doi={10.1109/TSP.2010.2049264},
	month={Aug.},}

@ARTICLE{C11456856,
	author={Chen, Guangyi and Zhang, Ruoyu and Guan, Xinrong and Wu, Qingqing and Ning, Boyu and Zhang, Yu and Wu, Wen and Zhang, Rui},
	journal={IEEE Trans. Wireless Commun.}, 
	title={Movable Antenna-Enabled {MIMO} Integrated Sensing and Communication: A Unified Mutual Information Framework}, 
	year={2026},
	volume={25},
	number={},
	pages={14440-14454},
	doi={10.1109/TWC.2026.3675871},
	month={Mar.},}

@ARTICLE{DOAGPEChengzhi2025,
	author={Ye, Chengzhi and Zhang, Ruoyu and Hu, Changcheng and Yao, Lei and Wu, Wen and Yuen, Chau},
	journal={IEEE Wireless Commun. Lett.}, 
	title={Robust {DOA} Estimation for Movable Antenna Arrays With Partial Gain and Phase Errors}, 
	year={2026},
	volume={15},
	number={},
	pages={545-549},
	doi={10.1109/LWC.2025.3630221},
	month={Nov.},}

@ARTICLE{6DMovableAntennaShaoXiaodan2025,
	author={Shao, Xiaodan and Jiang, Qijun and Zhang, Rui},
	journal={IEEE Trans. Wireless Commun.}, 
	title={{6D} Movable Antenna Based on User Distribution: Modeling and Optimization}, 
	year={2025},
	volume={24},
	number={1},
	pages={355-370},
	doi={10.1109/TWC.2024.3492195},
	month={Jan.},}

@ARTICLE{LWangMovableAntenna2025,
	author={Lyu, Wanting and others},
	journal={IEEE Trans. Wireless Commun.}, 
	title={Movable Antenna Enabled Integrated Sensing and Communication}, 
	year={2025},
	volume={24},
	number={4},
	pages={2862-2875},
	doi={10.1109/TWC.2025.3525631},
	month={Apr.},}

@ARTICLE{LiQiangAccurate2019,
	author={Li, Qiang and Su, Tao and Wu, Kai},
	journal={IEEE Commun. Lett.}, 
	title={Accurate {DOA} Estimation for Large-Scale Uniform Circular Array Using a Single Snapshot}, 
	year={2019},
	volume={23},
	number={2},
	pages={302-305},
	doi={10.1109/LCOMM.2018.2889855},
	month={Feb.},}

@ARTICLE{ye2026rotatableantennaenhancedwirelesssensing,
	author={Ye, Chengzhi and Zhang, Ruoyu and Du, Jincheng and Ma, Wenyan and Wu, Qingqing and Wu, Wen and Zhang, Rui},
	journal={IEEE Wireless Commun. Lett.},
	title={Rotatable Antenna-Enhanced Wireless Sensing with Uniform Sparse Array via Tensor Decomposition},
	year={2026},
	volume={},
	number={},
	pages={1-5},
	doi={10.1109/LWC.2026.3722674},
	note={early access, doi: 10.1109/LWC.2026.3722674}
}

@article{maMovableAntennaEnhanced2024,
  author={Ma, Wenyan and Zhu, Lipeng and Zhang, Rui},
journal={IEEE Trans. Wireless Commun.}, 
title={Movable Antenna Enhanced Wireless Sensing via Antenna Position Optimization}, 
year={2024},
volume={23},
number={11},
pages={16575-16589},
doi={10.1109/TWC.2024.3443293},
	month={Nov.},}

@ARTICLE{RZhangTerahertzIntegrated2025,
	author={Zhang, Ruoyu and Wu, Wen and Chen, Xiaoming and Gao, Zhen and Cai, Yueming},
	journal={IEEE Vehicular Technology Magazine}, 
	title={Terahertz Integrated Sensing and Communication-Empowered UAVs in 6G: A Transceiver Design Perspective}, 
	year={2025},
	volume={},
	number={},
	pages={2-11},
	doi={10.1109/MVT.2025.3531088},
	month={Feb.},}

@ARTICLE{Zhang2022Tensor,  
	author={Zhang, Ruoyu and Cheng, Lei and Wang, Shuai and Lou, Yi and Wu, Wen and Ng, Derrick Wing Kwan},  
	journal={IEEE Trans. Commun.}, 
	title={{Tensor decomposition-based channel estimation for hybrid mmWave massive {MIMO} in high-mobility scenarios}},   
	year={2022},  
	month = {Sep.},
	volume={70},  
	number={9},  
	pages={6325-6340},  
	doi={10.1109/TCOMM.2022.3187780}
}

@ARTICLE{Saad2020AVision,  
	author={Saad, Walid and Bennis, Mehdi and Chen, Mingzhe},  journal={IEEE Network},   
	title={A Vision of {6G} Wireless Systems: {A}pplications, Trends, Technologies, and Open Research Problems},   
	year={2020},  
	month={May}, 
	volume={34},  
	number={3},  
	pages={134-142},  
	doi={10.1109/MNET.001.1900287}
}

@article{faisal2020ultramassive,
	title={Ultramassive {MIMO} systems at {T}erahertz bands: Prospects and challenges},
	author={Faisal, Alice and Sarieddeen, Hadi and Dahrouj, Hayssam and Al-Naffouri, Tareq Y and Alouini, Mohamed-Slim},
	journal={IEEE Veh. Technol. Mag.},
	volume={15},
	number={4},
	pages={33--42},
	year={2020},
	month={Apr.},
	publisher={IEEE}
}

@ARTICLE{Zhang2024TWC,
	author={Zhang, Ruoyu and others},
	journal={IEEE Trans. Wireless Commun.}, 
	title={Integrated Sensing and Communication with Massive {MIMO}: A Unified Tensor Approach for Channel and Target Parameter Estimation}, 
	year={2024},
	month={Aug.},
	volume={23},
	number={8},
	pages={8571-8587},
	doi={10.1109/TWC.2024.3351856}
}

@ARTICLE{Zhu2023Movable,
	author={Zhu, Lipeng and Ma, Wenyan and Ning, Boyu and Zhang, Rui},
	journal={IEEE Trans. Wireless Commun. }, 
	title={Movable-Antenna Enhanced Multiuser Communication via Antenna Position Optimization}, 
	year={2024},
	month={Jul.},
	volume={23},
	number={7},
	pages={7214-7229},
}

\end{document}